\newcommand{\todo}[1]{}
\newcommand{\AP}{\textit{AP} }
\renewcommand{\next}{\LTLcircle}
\newcommand{\nats}{\mathbb{N}}
\newcommand{\LTL}{\textsc{LTL}}	
\newcommand{\srate}{\mathcal r}
\newcommand{\usrate}{\mathcal r^\infty}
\newcommand{\bp}{\textit{Bad}}
\newcommand{\gp}{\textit{Good}}
\newcommand{\card}[2]{\#_{#1}(#2)}
\newcommand{\nlogspace}{\textsc{NL}}
\newcommand{\pspace}{\textsc{Pspace}}
\newcommand{\exptime}{\textsc{Exptime}}
\begin{document}

\mainmatter  

\title{The Density of Linear-time Properties%
  \footnote{This work was partly supported by the ERC Grant 683300 (OSARES)
     and by the Deutsche Telekom Foundation.}}

\titlerunning{The Density of Linear-Time Properties}

%
%
\author{Bernd Finkbeiner \and Hazem Torfah}
\authorrunning{}

\institute{Saarland University}

%
%

\toctitle{Lecture Notes in Computer Science}
\tocauthor{Authors' Instructions}
\maketitle

\begin{abstract}
Finding models for linear-time properties is a central problem in verification and planning. We study the distribution of linear-time models by investigating the density of linear-time properties over the space of ultimately periodic words.
 The density of a property over a bound~$n$ is the ratio of the number of lasso-shaped words
of length $n$, that satisfy the property, to the total number of lasso-shaped words of length~$n$. We investigate the problem of computing the density for both linear-time properties in general and for the important special case of $\omega$-regular properties. For general linear-time properties, the density is not necessarily convergent and can oscillates indefinitely for certain properties. However, we show that the oscillation is bounded by the growth of the sets of bad- and good-prefix of the property. For $\omega$-regular properties, we show that the density is always convergent and provide a general algorithm for computing the density of $\omega$-regular properties as well as more specialized algorithms for certain sub-classes and their combinations.
\end{abstract}

\section{Introduction}
Given a linear-time property, specified for example as a formula of a temporal logic, how hard is it to guess a model of the property? Temporal models play a fundamental role in verification
and planning, for example in the satisfiability problem of temporal logic~\cite{Rozier+Vardi/07/LTL}, in model checking~\cite{Baier:2008:PMC:1373322}, and in
temporal planning~\cite{DBLP:conf/ijcai/PatriziLGG11}. With this paper, we initiate the first systematic study of the \emph{density} of the linear-time temporal models. 

The first choice to be made at the outset of such an investigation is how
to represent temporal models. We base our study on ultimately
periodic words, i.e., infinite words of the form $u \cdot v^\omega$,
where $u$ and $v$ are finite words. This is motivated by the fact that
ultimately periodic words are the natural and commonly used
representation in all applications, where the underlying state space is
finite (cf. \cite{Clarke:2001:BMC:510986.510987}).  With this
choice of representation, our central
question thus is the following:
Suppose you are given an infinite word $u \cdot v^\omega$, where $u$
and $v$ are two finite words, 
that have been chosen randomly from all
sequences over a given alphabet. How likely is it that $u\cdot v^\omega$ is a model of a given
linear-time property?

We consider the \emph{cardinality} and the \emph{density function} in terms of the bound~$n$.
The cardinality of a property $\varphi$ for a given bound $n$ is the number of lassos of length $n$, that are models for the property $\varphi$, denoted by 
$\card{\varphi}{n} = | \{ (u,v) \in \Sigma^{*}\times \Sigma^{+} \mid |u \cdot v| = n,~ u\cdot v^\omega \in \varphi\} |$.  The density function of a property $\varphi$ determines the rate of the cardinality of $\varphi$  over the whole solution space for the specific bound $n$, and is denoted by $\srate_\varphi(n) = \card{\varphi}{n} / | \{ (u,v) \in \Sigma^{*}\times \Sigma^{+} \mid |u \cdot v| = n \} | = \card{\varphi}{n} / (n\cdot |\Sigma|^n)$. To answer the question posed above, we study the asymptotic behavior of the density function, i.e., the limit $\lim \limits_{n \rightarrow \infty} \srate_\varphi(n)$, which we denote by $\usrate_\varphi$ and refer to as the \emph{density} of the property $\varphi$.

Consider
 the following linear-time properties over the alphabet $\Sigma= 2^{\{a,b\}}$. The density function of the property given by the LTL formula $\varphi_1 = a \wedge \LTLcircle b \wedge \LTLcircle \LTLcircle b$ is constant and equal to $\frac{1}{8}$ for bounds larger than two, because there is no restrictions on the labeling once the constraint of a labeling $a$ followed by two $b$'s has been satisfied. 
The density of the property $\varphi_2 = a \LTLuntil b$ is equal to $\frac{2}{3}$ as the increase in the number of models is twice as large as the increase in the number of non-models for increasing bounds. Properties like  $\varphi_3 = \LTLsquare (a \wedge \LTLcircle b)$ have densities equal to 0, because the cardinality of the set of bad-prefixes increases exponentially with increasing bounds, in comparison to a linear increase in the number of its models. 

Two key questions of interest are whether or not the density exists for a linear-time property and, if the answer to the first question is yes, to compute its value.
It is not obvious that the density exists for linear-time properties. 
In the case of $\omega$-regular properties, we show that the density indeed always exists. This stands in contrast to regular properties of finite words, where this is not always true. 
Consider, for example, the regular property $(aa)^*$. Models for the property exist only for even bounds, and the density function oscillates between 0 and 1 for the alphabet $\{a\}$.  The $\omega$-regular property $(aa)^\omega$ for the same alphabet, however, has ultimately periodic models for all bounds and its density function converges to~0.
The density for linear-time properties in general, nevertheless, does not necessarily exist. We show that for certain not $\omega$-regular properties, the density function oscillates indefinitely without converging.

In case the density function cannot be computed, we show that it can still be approximated by examining the growth of the sets of good- and bad-prefixes of the property. The density of good-prefixes of a property defines a lower bound on the density. The density of bad-prefixes defines an upper bound on the density. Whether a density exists for property $\varphi$ depends on the densities of four classes of lassos, that partition the whole space of lassos with respect to $\varphi$. These classes represent lassos $(u,v)$, where $u\cdot v$ is a bad-prefix for $\varphi$, a good-prefix for $\varphi$, or models or non-models of $\varphi$ where $u\cdot v$ is neither a good- nor a bad-prefix. We present few ways to check  the existence of the density of a linear-time property with respect to the densities of each of these classes. To illustrate the affect of these classes consider the property $\next\next\next a$. It is clear that the rate of both the classes of models with no good-prefix and non-models with no bad-prefix converge to~0. This means, the upper and the lower bound of the density, determined by the classes of bad-prefix non-models and good-prefix models, meet in the limit and the density function converges to $\frac{1}{2}$.  

For the special case of $\omega$-regular properties, the limit of the density function can be computed algorithmically.
This can be done by constructing an unambiguous $\omega$-automaton, that defines the property and computing the probability of reaching an accepting strongly connected component in the automaton.  
Building on top of the algorithmic ideas, we also investigate the qualitative density checking problems, i.e., we determine if the density of a property is equal to 0 or 1, and provide a complete complexity analysis, determining the lower and upper bounds of these problems. Table~\ref{tab:summary} gives a summary on the complexity results shown and proven in the paper. 

\begin{table}[h]
\caption{Results for the computational complexity of computing the density of $\omega$-regular languages.}
\centering
\begin{tabular}{|c||c|c|c|}
	\hline 
	 & LTL & non-deterministic B\"uchi & deterministic Parity \\
	\hline
	\hline
	$\usrate_\varphi$ & \exptime & \exptime & P \\
	\hline
	 $\usrate_\varphi > 0$ & \pspace-compl. & P-compl. & \nlogspace-compl. \\
	\hline
	$\usrate_\varphi < 1$ & \pspace-compl. &  \pspace-compl. & \nlogspace-compl. \\
	\hline
\end{tabular}
\label{tab:summary}
\end{table}

For some sub-classes of $\omega$-regular properties, we can even avoid the costly construction of the automaton. We investigate a series of sub-classes and show how to compute the density for these classes and any of their combinations. In the case of \LTL, we match syntactic classes to the introduced sub-classes and show that the density of a boolean combination of these syntactic classes can be reduced to the computation of the density of a much smaller formula. 

\paragraph{\bf Related work.}
In the setting of finite, rather than infinite, words, the study of
density has a long history~\cite{Bodirsky+others/04/Efficiently,Chomsky195891,Flajolet1987283,Hartwig/10/Density,Szilard+others/92/Characterizing}. For each language $\varphi \subseteq \Sigma^{*}$ of finite words over some alphabet $\Sigma$, the \emph{density function} is defined as the quotient $\srate_\varphi(n) =
\card{\varphi}{n}/|\Sigma^n|$, where $|S|$ denotes the cardinality of a set
$S$ and $\card{\varphi}{n} = |\varphi \cap
\Sigma^n|$, i.e., the number of words of length $n$ in $\varphi$.
 In 1958, Chomsky and Miller~\cite{Chomsky195891} showed that for each regular language $\varphi$,
there exists an initial length $n_0$ such that for all $n\geq n_0$,
$\card{\varphi}{n}$ can be described by a linear recurrence. For example, for the language $\psi$ of the regular expression $(ab+baa)^{*}$, we have that $\card{\psi}{n} = \card{\psi}{n-2}+\card{\psi}{n-3}$. The recursive description of $\card{\varphi}{n}$ allows for a detailed analysis of the shapes of $\card{\varphi}{n}$ and $\srate_\varphi(n)$ (cf.~\cite{Hartwig/10/Density}).
The result
was later extended to the nonambiguous context-free languages~\cite{Hartwig/10/Density}. Much attention has focussed on \emph{sparse} languages, i.e., languages, where $\#_\varphi(n)$ can be bounded from above by a polynomial~\cite{Demaine2003471,Flajolet1987283,Szilard+others/92/Characterizing}. Sparse languages can be used to restrict NP-complete problems so that they can be solved polynomially~\cite{Demaine2003471}. An interesting application of the density is to determine how well a non-regular language is approximated by a finite automaton~\cite{Eisman:2005:ARN:1082161.1082186}; this is important in streaming algorithms, where the incoming string must be classified quickly, and it suffices if the classification is correct most of the time.

    In previous work~\cite{DBLP:conf/lata/FinkbeinerT14}, we have presented automata-based algorithms for computing $\card{\varphi}{n}$ for safety specifications expressed in LTL. These algorithms compute $\card{\varphi}{n}$ for a specific property $\varphi$ and a specific value of $n$, but cannot be used to derive the convergences value of $\srate_\varphi(n)$ for an entire class of properties.
  Faran and Kupferman have recently investigated the probability that a prefix of a word not in $\varphi$ is a bad prefix of $\varphi$ \cite{FK15}. This probability is used to quantitatively determine the ``safety level'' of $\varphi$. The analysis again is done with respect to the finite words not the infinite words. A key difference to our work is that the safety level does not give the probability of a model and does not distinguish between properties of the same class but with different density values.
 Also related is Asarin \emph{et~al.}'s investigation of the asymptotic behavior in temporal logic~\cite{Asarin:2014:ABT:2603088.2603158}. The authors use the notion of entropy to show the relation between formulas in parametric linear-time temporal logic and formulas in standard LTL as some bounds tend to infinity.


\section{Preliminaries}
\paragraph{Linear-time Properties and Models.} A \emph{linear-time property} $\varphi$  over an alphabet~$\Sigma$ is a set of infinite words $\varphi \subseteq \Sigma ^\omega$. 
Elements of $\varphi$ are called \emph{models} of $\varphi$. The complement set $\overline\varphi=  \Sigma^\omega \setminus \varphi$ is called the set of \emph{non-models} of $\varphi$. 

A \emph{lasso} over an alphabet $\Sigma$ of length $n$ is a pair $(u,v)$ of finite words $u\in \Sigma^{*}$ and $v \in \Sigma^{+}$ with  $|u\cdot v|~=n$, that induces the ultimately periodic word  $u \cdot v^\omega$. We call $u\cdot v$ the base of the lasso or ultimately periodic word. 
An $n$-\emph{model} for the property $\varphi$  over $\Sigma$ is a lasso $(u,v)\in \Sigma^* \times \Sigma^+$ of length~$n$ such that the induced ultimately periodic word $u\cdot v^\omega \in \varphi$. We call $n$ the bound of the $n$-model. 
 The language $ L_n(\varphi)$ for a bound $n$ is set of $n$-models of~$\varphi$. Note that a model of $\varphi$ might be induced by more than one $n$-model, e.g, $a^\omega$ is induced by $(a, a)$ and $(\epsilon, a a)$. The complement language  $\overline{ L_n(\varphi)}$ is the set of $n$-non-models of $\varphi$. 
We call the linear-time property over $\Sigma$, whose models build up the set of all lassos over $\Sigma$ the \emph{universal property} and denote it by $\top$. 
The \emph{cardinality} of a property $\varphi$ for a bound $n$, denoted by $\card{\varphi}{n}$, is the size of  $ L_n(\varphi)$.  

\paragraph{Safety and Liveness.} For an infinite word $\sigma = \alpha_1\alpha_2 \dots \in \Sigma^\omega$ we denote every prefix $\alpha_1\dots \alpha_i$ by $\sigma[\dots i]$.
A finite word $w= \alpha_1 \dots \alpha_i \in \Sigma^{*}$ is called a bad-prefix for a property $\varphi$, if every infinite word $\sigma \in \Sigma^\omega$ with $\sigma[\dots i]= w$ is not a model of $\varphi$. We call a bad-prefix $w$ \emph{minimal}, if no prefix of $w$ is a bad-prefix for $\varphi$. We denote the set of bad-prefixes of a property $\varphi$ by $\bp(\varphi)$. 
A finite word $w =\alpha_1 \dots \alpha_i \in \Sigma^{*}$ is called a good-prefix for a property $\varphi$, if every infinite word $\sigma \in \Sigma^\omega$ with $\sigma[\dots i]=w$ is a model of $\varphi$. We call a good-prefix $w$ minimal, if $w$ has no prefix, that is also a good-prefix for $\varphi$. We denote the set of good-prefixes of a property $\varphi$ by $\gp(\varphi)$.

A property $\varphi$ is a \emph{safety} property if every non-model of $\varphi$ has a bad-prefix for $\varphi$.  A property $\varphi$ is a \emph{liveness} property if  every finite word $w$ can be extended by an infinite word $\sigma$ such that $w \cdot \sigma$ is a model of $\varphi$. 
A property $\varphi$ is a \emph{co-safety} property if every model of $\varphi$ has a good-prefix for $\varphi$. Co-safety properties can be either safety or liveness properties. The only property that is both liveness and safety at the same time is the \emph{universal} property $\top$.

\paragraph{Linear-time Temporal Logic.}
We use Linear-time Temporal Logic (LTL) \cite{Pnueli:1977:TLP:1382431.1382534}, with the usual temporal operators Next $\LTLcircle$, Until $\LTLuntil $, and the derived operators Release $\LTLrelease$, which is the dual operator of $\LTLuntil$, Eventually $\LTLdiamond$ and Globally~$\LTLsquare$.
 LTL formulas are defined over a set of atomic propositions $\AP$.
We denote the satisfaction of an LTL formula $\varphi$ by an infinite sequence $\sigma \in (2^{AP})^\omega $ of valuations of the atomic propositions  by $\sigma \models \varphi$ and call $\sigma$ a model of $\varphi$. For an LTL formula~$\varphi$ we define $L(\varphi)$ by the set $\{\sigma \in (2^{AP})^\omega \mid \sigma \models \varphi \}$. A lasso $(u,v)$ of length $n$ is an $n$-model of an LTL formula $\varphi$ if $u\cdot v^\omega \in L(\varphi)$. If $u\cdot v^\omega$ is not a model of $\varphi$, the lasso is  called an $n$-non-model of $\varphi$. 

\paragraph{Parity Automata.}
A parity automaton over an alphabet $\Sigma$ is a tuple $\mathcal A = (Q,Q_0,\delta,c)$, where $Q$ is a set of states, $Q_0$ is a set of initial states, $\delta: Q \times \Sigma \rightarrow 2^Q$ is a transition relation, and $c: Q \rightarrow \nats$ is a coloring function. 
A run of $\mathcal A$ on an infinite word $w = \alpha_1 \alpha_2 \dots \in \Sigma^\omega$ is an infinite sequence $r = q_0 q_1 \dots \in Q^\omega$ of states, where $q_0 \in Q_0$ and for each $i \ge 0$, $q_{i+1} = \delta(q_i,\alpha_{i+1})$. 
We define $\textbf{Inf}(r)=\{q \in Q \mid \forall i \exists j>i.~q_j = q\}$. A run $r$ is called accepting if $\max \{c(q) \mid q \in \textbf{Inf}(r)\}$ is even. A word $w$ is accepted by $\mathcal A$ if there is an accepting run of $\mathcal A$ on $w$. 

 The automaton is called deterministic if the set $Q_0$ is a singleton and for each $(q,\alpha) \in Q \times \Sigma$ we have  $|\delta(q,\alpha)|\leq 1$. The automaton is called unambiguous if for each accepted word $w$ there is exactly one accepting run of the automaton on~$w$. A parity automaton is called a \emph{B\"uchi} automaton if the image of $c$ is contained in $\{1,2\}$. An automaton is \emph{complete} if each state has an outgoing transition for each letter $\alpha \in \Sigma$. In the paper we always consider complete automata. 

A strongly connected component (SCC) in $\mathcal A$  is a strongly connected component of the graph induced by the automaton.
 A strongly connected component is called 
 \emph{terminal} if none of the states in the SCC has a transition, that leaves the SCC. An SCC is called \emph{accepting} if the highest color of the states of the SCC is even.


\section{The Density of Linear-time Properties}

For a given linear-time property $\varphi$, the density function of $\varphi$ gives the  distribution of models of $\varphi$ for increasing bounds $n$.

\begin{definition}[Density]
\label{def:srate}
	The density function of a linear-time property $\varphi$ over an alphabet $\Sigma$ and a bound $n$ is the ratio between the cardinality of $\varphi$ for $n$ and the number of lassos of length $n$ over $\Sigma$: 
$$
	\srate_{\varphi}(n) =\frac{\card{\varphi}{n}}{n\cdot |\Sigma|^n}
$$

The asymptotic density of $\varphi$ (short density) is the value $\lim \limits_{n \rightarrow \infty}~ \srate_\varphi(n)$ (in case it exists), which we  denote by $\usrate_{\varphi}$.
\label{def:Density}
\end{definition}

In previous work we presented an algorithm for computing the cardinality of safety LTL formulas for a given bound. The algorithm is doubly-exponential in the length of the formula yet linear in the bound \cite{DBLP:conf/lata/FinkbeinerT14}. An algorithm based on a translation to a propositional formula is exponentially less expensive in the formula than our algorithm, but exponentially more expensive in the bound. With respect to counting complexity classes, the complexity of computing the density of a property $\varphi$ depends on the complexity of the membership test allowed by the representation of $\varphi$. Counting the number of models for a bound $n$ and a property given as an LTL formula has been shown to be in $\#$P \cite{TZ14}. Using these results we summarize the counting complexities of computing the density of $\omega$-regular properties in the following theorem\footnote{ For more on counting complexities and the counting problem for linear-time temporal logic we refer the reader to \cite{DBLP:conf/lata/FinkbeinerT14,TZ14,DBLP:journals/tcs/Valiant79}}. 

\begin{theorem}
\label{theo:sharpnlogspaceauto}
	For an $\omega$-regular property $\varphi$ given by an LTL formula, a nondeterministic B\"uchi automaton, or a deterministic parity automaton, and for a given bound $n$, the problem of computing $\srate_\varphi(n)$ is in $\#$P.
\end{theorem}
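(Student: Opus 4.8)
Since the denominator $n\cdot|\Sigma|^n$ of $\srate_\varphi(n)$ is computable in time polynomial in a unary encoding of $n$, it suffices to show that $(\varphi,n)\mapsto\card{\varphi}{n}$ is a $\#$P function; $\srate_\varphi(n)$ is then recovered by one polynomial-time division. (As $\srate_\varphi(n)$ is rational while $\#$P functions are integer-valued, the statement is meant in exactly this sense; note also that $n$ must be given in unary, since otherwise $n\cdot|\Sigma|^n$ is doubly exponential in the input length and cannot be the value of a $\#$P function.) Recall that $\card{\varphi}{n}=|L_n(\varphi)|$ counts the lassos $(u,v)$ with $|u\cdot v|=n$ and $u\cdot v^\omega\in\varphi$, out of the $n\cdot|\Sigma|^n$ lassos of length $n$.

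The plan is to exhibit a nondeterministic polynomial-time Turing machine $M$ that, on input $(\varphi,1^n)$, first guesses a lasso of length $n$ by choosing $n$ letters from $\Sigma$ together with a split point $i\in\{0,\dots,n-1\}$ (so $|u|=i$, $|v|=n-i$), and then \emph{deterministically} tests whether $u\cdot v^\omega\in\varphi$, accepting on that branch if and only if it does. The accepting computations of $M$ then correspond bijectively to $L_n(\varphi)$, so $\card{\varphi}{n}\in\#$P. All the nonroutine content is therefore to show that ``$u\cdot v^\omega\in\varphi$'' is decidable in deterministic time polynomial in $n$ and in the size of the representation of $\varphi$.

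For an \LTL\ formula $\varphi$ this is the standard path-model-checking algorithm: regard $u\cdot v^\omega$ as a cyclic structure on the $n$ positions $0,\dots,n-1$ (the successor of position $n-1$ being position $|u|$), and compute bottom-up, for every subformula $\psi$ and every position, the truth value of $\psi$ at that position; the only non-immediate case, $\psi_1\LTLuntil\psi_2$, is a least fixpoint over the $n$ positions that stabilizes in at most $n$ rounds, giving an $O(|\varphi|\cdot n^2)$ procedure. For a nondeterministic B\"uchi automaton $\mathcal A$: compute by reachability over the letters of $u$ the set $S$ of states reachable from an initial state on $u$; compute the relations on states ``$p$ reaches $p'$ on reading one copy of $v$'' and ``$p$ reaches $p'$ on reading one copy of $v$ via a path through an accepting state''; then test whether some state of $S$ can reach, in the directed graph of the first relation, a strongly connected component containing an edge of the second relation. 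For a deterministic parity automaton $\mathcal A$: read $u$ to a state $q$ and iterate the ``read one $v$'' successor from $q$ until the state sequence repeats (within $|Q|$ steps); the states visited infinitely often are exactly those visited along the $v$-blocks of the resulting cycle, and one checks that their maximal color is even. Each of these procedures is polynomial.

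Assembling these pieces yields $M$ and hence the theorem. The main obstacle is the B\"uchi case, where one must justify the equivalence ``$\mathcal A$ accepts $u\cdot v^\omega$'' $\iff$ ``some state reachable on $u$ reaches a cycle in the $v$-step graph that uses an accepting-path edge'': the forward direction extracts, from an accepting run, a state that recurs among the states reached after $u\cdot v^k$ and observes that the intervening run segments form such a cycle, and the converse splices a run along that cycle. This characterization is precisely what keeps the check polynomial, avoiding an exponential explicit unrolling of $v^\omega$; the \LTL\ and parity cases are routine once the cyclic structure of $u\cdot v^\omega$ is made explicit.
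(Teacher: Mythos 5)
Your proposal is correct and follows essentially the same route as the paper: a nondeterministic polynomial-time machine guesses the lasso $(u,v)$ and deterministically checks $u\cdot v^\omega\in\varphi$, so that accepting branches biject with $n$-models; the paper delegates the polynomial-time membership tests to cited results on LTL path checking and automata, where you spell the algorithms out explicitly. Your added remarks on the unary encoding of $n$ and on recovering the rational value $\srate_\varphi(n)$ from the integer-valued $\#$P function are careful clarifications the paper leaves implicit, not a different argument.
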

\begin{proof}
	To show that the problem is in \#P, we show that there is nondeterministic polynomial-time Turing machine $\mathcal M$, such that, the number of accepting runs of the machine on a given bound $n$ and a property $\varphi$, is equal to the number of $n$-models of $\varphi$.  We define $\mathcal M$ as follows. The machine $\mathcal M$ guesses a prefix $u$ and a period $v$ of an ultimately periodic word $u\cdot v^\omega$ with $|u\cdot v| = n$, and checks whether $u\cdot v^\omega$ satisfies $\varphi$, which can be done polynomial time when $\varphi$ is an LTL formula \cite{KuhtzFinkbeiner09}, and in logarithmic space when $\varphi$ is given by a nondeterministic B\"uchi or a deterministic parity automaton \cite{Markey2003}. For each $n$-model $(u, v)$ of $\varphi$ there is exactly one accepting run of $\mathcal M$. Thus, counting the $n$-models of $\varphi$ can be done by counting the accepting runs of $\mathcal M$ on the input $(n, \varphi)$. \qed
\end{proof}

Before getting to the computational complexity of computing the density of a given linear-time property, we illustrate what factors play a role in shaping the density function of a property. Consider the density of the universal property $\top$, which is constant and equal to 1, as its cardinality is defined by $\card{\varphi}{n}=n\cdot|\Sigma|^n$. For each bound $n$, we can transform every $n$-model of $\top$ to a $(n+1)$-model by extending the base of the $n$-models with one letter from $\Sigma$ and adding one of the now $n+1$ possible loops to the new base. The number of bases for $n$-models for the property $\top$ is $\frac{\card{\top}{n}}{n}$. Thus, the number of $(n+1)$-models for $\top$ is  equal to $|\Sigma|\cdot\frac{n+1}{n}\cdot \card{\top}{n}$. According to the definition of the density function, this means that the monotonicity of the density function of a property in some bound $n$ depends on whether the increase in the number of models in $n$ is larger or smaller than the growth factor $|\Sigma|\cdot\frac{n+1}{n}$. 

We define the \emph{growth function} of a property $\varphi$ by $ \varsigma_\varphi(n) = \frac{\#_{\varphi}(n+1)}{\#_\varphi(n)}$. We call the function $\varsigma_{\top}(n)=|\Sigma|\cdot\frac{n+1}{n}$ the \emph{universal growth function}. The following proposition clarifies the relation between the monotonicity of the density function of a linear-time property $\varphi$ and the universal growth function. Furthermore, the proposition shows the relation between the growth function of $\varphi$ and the growth function of its complement $\overline \varphi$.

\begin{proposition}
\label{prop:rateCompare}
Given a property $\varphi$ over $\Sigma$ the following holds:
\begin{enumerate}
	\item $\forall n.~\srate_\varphi(n) = \srate_\varphi(n+1) $ if and only if $ \varsigma_\varphi(n) = \varsigma_\top(n)$
	\item $\forall n.~\srate_{\varphi}(n+1) > \srate_{\varphi}(n) $ if and only if $  \varsigma_{\varphi}(n) > \varsigma_{\top}(n)$
	\item $\forall n.~\varsigma_{\varphi}(n) = \varsigma_\top(n) $ if and only if $ \varsigma_{\overline\varphi}(n) = \varsigma_\top(n)$
	\item $\forall n.~\varsigma_{\varphi}(n) > \varsigma_\top(n) $ if and only if $ \varsigma_{\overline \varphi}(n) <\varsigma_\top(n)$
\end{enumerate}
\end{proposition}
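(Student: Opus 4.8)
The plan is to derive all four equivalences by elementary algebra, using only the defining equations together with two observations. The first is the \emph{partition identity} $\card{\varphi}{n} + \card{\overline\varphi}{n} = \card{\top}{n} = n\cdot|\Sigma|^n$, which holds because $L_n(\varphi)$ and $\overline{L_n(\varphi)}$ partition the set of all lassos of length~$n$. The second is that the universal property is a fixed point of its own growth, $\card{\top}{n+1} = \varsigma_\top(n)\cdot\card{\top}{n}$, which is immediate since $|\Sigma|\cdot\frac{n+1}{n}\cdot n|\Sigma|^n = (n+1)|\Sigma|^{n+1}$. Throughout I fix a bound~$n$ and recall that $\varsigma_\varphi(n)$, resp.\ $\varsigma_{\overline\varphi}(n)$, is defined only when $\card{\varphi}{n}>0$, resp.\ $\card{\overline\varphi}{n}>0$; in that case all quantities appearing below are strictly positive, which is what makes the divisions legitimate and preserves the direction of the inequalities. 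In the degenerate cases, where $\varphi$ has no $n$-model or $\overline\varphi$ has no $n$-model, the corresponding assertions are vacuous.

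For items~1 and~2, I would substitute $\srate_\varphi(m) = \card{\varphi}{m}/(m|\Sigma|^m)$ on both sides of the relation between $\srate_\varphi(n+1)$ and $\srate_\varphi(n)$ (equality for item~1, $>$ for item~2) and clear denominators by multiplying with the positive quantity $(n+1)|\Sigma|^{n+1}\cdot n|\Sigma|^n$; this rewrites the relation as the same relation between $\card{\varphi}{n+1}\cdot n|\Sigma|^n$ and $\card{\varphi}{n}\cdot(n+1)|\Sigma|^{n+1}$, and dividing both sides by the positive quantity $\card{\varphi}{n}\cdot n|\Sigma|^n$ turns it into exactly the same relation between $\varsigma_\varphi(n)$ and $\varsigma_\top(n)$. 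One computation handles both equivalences, with the equality case for item~1 and the (direction-preserving) strict case for item~2.

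For items~3 and~4, I would prove the uniform claim that, for each relation symbol $\bowtie$ among $<$, $=$, $>$, we have $\varsigma_\varphi(n)\bowtie\varsigma_\top(n)$ if and only if $\varsigma_{\overline\varphi}(n)\,\overline{\bowtie}\,\varsigma_\top(n)$, where $\overline{\bowtie}$ swaps $<$ with $>$ and fixes $=$; item~3 corresponds to $\bowtie$ being $=$ and item~4 to $\bowtie$ being $>$. For the forward direction, from $\card{\varphi}{n+1}\bowtie\varsigma_\top(n)\card{\varphi}{n}$ together with $\card{\top}{n+1}=\varsigma_\top(n)\card{\top}{n}$, subtracting the former from the latter and using the partition identity at bounds~$n$ and~$n+1$ yields $\card{\overline\varphi}{n+1}=\card{\top}{n+1}-\card{\varphi}{n+1}\,\overline{\bowtie}\,\varsigma_\top(n)\bigl(\card{\top}{n}-\card{\varphi}{n}\bigr)=\varsigma_\top(n)\card{\overline\varphi}{n}$, i.e.\ $\varsigma_{\overline\varphi}(n)\,\overline{\bowtie}\,\varsigma_\top(n)$. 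The converse follows by applying the forward direction with $\overline\varphi$ in place of $\varphi$, since $\overline{\overline\varphi}=\varphi$ and $\overline{\overline{\bowtie}}=\bowtie$.

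I do not expect any genuine obstacle: the argument is pure bookkeeping of the factor $\varsigma_\top(n)$. The only points that need care are to keep the bound~$n$ fixed throughout and to record explicitly where strict positivity of $\card{\varphi}{n}$ and $\card{\overline\varphi}{n}$ is used, so that no division is by zero and no inequality silently changes direction.
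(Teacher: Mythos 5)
Your proof is correct. The paper states this proposition without giving any proof, so there is nothing to compare against; your argument supplies exactly the elementary bookkeeping that the surrounding discussion takes for granted. The two ingredients you isolate---the partition identity $\card{\varphi}{n}+\card{\overline\varphi}{n}=n\cdot|\Sigma|^n=\card{\top}{n}$ and the identity $\card{\top}{n+1}=\varsigma_\top(n)\cdot\card{\top}{n}$---are precisely what the paper's informal discussion of the universal growth function appeals to, and your uniform treatment of the three relation symbols for items~3 and~4 is a clean way to package the complementation argument. Your explicit handling of the degenerate cases where $\card{\varphi}{n}=0$ or $\card{\overline\varphi}{n}=0$ (so that $\varsigma_\varphi(n)$ or $\varsigma_{\overline\varphi}(n)$ is undefined) is a point the paper silently ignores, and is worth recording.
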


For any linear-time property $\varphi$, the function $\card{\varphi}{n}$ is monotonically increasing. This is due to the fact that any $n$-model of the property $\varphi$ can be mapped to a $(n+1)$-model of $\varphi$, namely the one that results from unrolling the loop of the $n$-model by one position. From the last proposition we read thus that the monotonicity of the density function of $\varphi$ at some bound $n$ depends on the number of new $(n+1)$-models, i.e., those that cannot be rolled back to $n$-models. 
A density function, where the increase in models at some bound is higher (lower) than the increase in all lassos (models of the universal property $\top$) is  increasing (decreasing) at that bound. This in turn means that the growth factor of the number of non-models is lower (higher) at the same bound. An oscillating function is one, where the increase in the number of models is interchangeably higher and lower than the increase in the total number of lassos. 

Whether the density of a property exists, converges, is monotone or oscillating, depends on the densities of the following classes of lassos, that form with respect to a given property a partition of the space of lassos. 
For a property $\varphi$,  we split the set of lassos into four classes:

\begin{itemize}
	\item \textbf{Base non-models:} These are non-models $(u,v)$, where $u\cdot v \in \bp(\varphi)$.   
	\item \textbf{Base models:} These are models $(u,v)$, where $u\cdot v \in \gp(\varphi)$. 
	\item \textbf{Loop non-models:} These are non-models $(u,v)$, where $u \cdot v \not \in \bp(\varphi)$. 
	\item \textbf{Loop models:} These are models $(u,v)$ where $u \cdot v \not \in \gp(\varphi)$. 
\end{itemize}

\begin{figure}[t]
\centering
\includegraphics[width=0.8\textwidth]{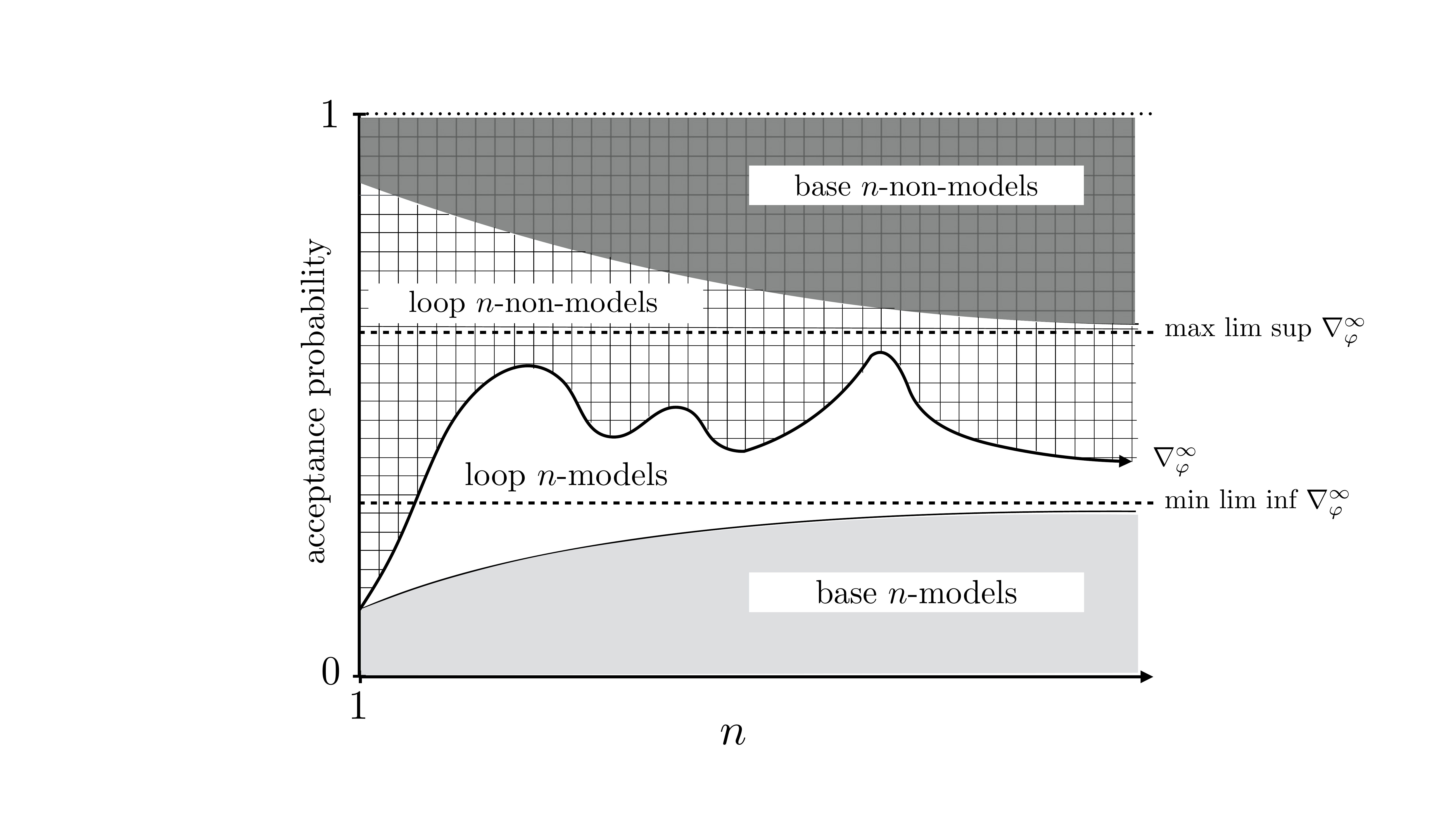}
\caption{The change in the density of the different classes of lassos for a linear property $\varphi$ over increasing bounds $n$. Notice that both the rates of base $n$-non-models (lined gray area) and base $n$-models (plain gray area) is monotonically increasing forming an upper and lower bound on the density.}
\label{fig:classes}
\end{figure}

In Figure~\ref{fig:classes}, we show how each of these classes grow over increasing bounds. For any property $\varphi$ and for all bounds $n$, the classes of base non-models and base models increase by a factor larger or equal to $\varsigma_\top(n)$, because any extension of a bad-prefix remains a bad-prefix and every extension of a good-prefix also remains a good-prefix. Following Proposition \ref{prop:rateCompare}, this means that for any bound~$n$, the rates of base $n$-models and base $n$-non-models to the set of all $n$-lassos are monotonically increasing and thus converging. The rate of base models defines for each $n$ a lower bound for the density function $\srate_\varphi(n)$. Its convergence value defines in turn a lower bound on the limit inferior of the density function. The rate of base non-models defines an upper bound on the density function and its convergence value is an upper bound on the limit superior of the density function. 
The increase in the number of lassos of the classes of loop models and loop non-models depends highly on the property. An extension of the bases may result in a new bad-prefix a new good-prefix, or a base on top of which new loop models or non-models can be obtained. This  means that the rate of these two classes to the set of all lassos might oscillate indefinitely without converging as we show for some  properties in the next section. This in turn means that the convergences behavior of the density function of $\varphi$ is determined by the convergence of the rate of loop models of $\varphi$.



\subsection{Asymptotic Density}
In this section we investigate which linear-time properties have a converging density function. In the case of finite regular properties the density does not always exist. This follows from the fact that some regular properties allow no models for certain bounds as we have seen in the introduction. In contrast, in the case of $\omega$-regular properties we will show that the density always exists. For general linear-time properties however we will see that this does not necessarily hold when considering  $\omega$-non-regular properties as shown in detail in Theorem~\ref{theo:nonconvprop}.

We classify a property $\varphi$ according to the convergence of its density function to either: \emph{0-convergent} when $\usrate_\varphi = 0$, \emph{1-convergent} when $\usrate_\varphi = 1$, \emph{$\epsilon$-convergent} when $\usrate_\varphi = \epsilon$ for ~$0<\epsilon<1$, and \emph{$\bot$-convergent} when the density function is non-convergent. 
 
The change in the size of the different classes of lassos presented in the last section plays a key role in the convergence behavior of a property. From the last section we know that the rates of base models and base non-models are always convergent. This means the convergence behavior depends on the rates of loop models and loop non-models. 
For example,
 the property $\next p$, has no loop models nor loop non-models for bounds larger that 2, and the rates of these classes converge to 0. All lassos of length greater or equal to 2 belong to one of the sets of base models or base non-models, depending on whether the second position of the lasso is labeled with $p$ or not, and thus, the density of $\next p$ is determined by the rates of base models and base non-models. The number of base models of $\next p$ is equal to $2^{\AP-1}\cdot(2^{\AP})^{n-1}\cdot n$ for $n>1$, which results in a density of $\frac{1}{2}$.
 
 The rates of base models and base non-models also  determine the density of the safety property $q \LTLrelease p$ over $\AP=\{p,q\}$, which  convergences to a value of $\frac{1}{3}$. 
   The property has no loop non-models and $n$ loop models for each bound $n$, namely those where all positions are labeled with $p$ and not labeled with $q$. Thus, the rates of loop models and loop non-models converge to 0. In the case of base-models we can count $\sum \limits_{i=1}^{n} (2^{\AP})^{n-i} \cdot n$ many base $n$-models, because for each $1 \leq i \leq n$, there are $(2^{\AP})^{n-i} \cdot n$ many base $n$-models which are labeled with $p$ and $q$ at position $i-1$ and with only $p$ for all positions smaller than $i$, and arbitrarily for all positions greater than $i$. Applying Definition~\ref{def:Density}, the density function of $q \LTLrelease p$ can be computed as $\sum \limits_{i=1}^n (4)^{-i}$, which converges towards $\frac{1}{3}$ when $n$ tends to infinity.
   
An example, where the density depends fully on the rate of loop models is $\LTLdiamond \LTLsquare p$. The property has neither base models nor base non-models. A lasso is a loop model for $\LTLdiamond \LTLsquare p$ if all positions in the loop are labeled with $p$. For a bound~$n$, there are $\sum \limits_{i=1}^{n} (2^\AP)^{i-1}$ many loop $n$-models ($i$ is the position of the loop). This results in a density function equal to $\sum \limits_{i=1}^{n} \frac{ (2^{\AP})^{i-1}}{(2^\AP)^{n} \cdot n}$ which converges to 0 when $n$ grows to infinity\footnote{The formula $\LTLdiamond \LTLsquare p$ is an example of a 0-convergent liveness formula.}.
   
If none of the sets of bad- nor good-prefixes is empty, and the rate of loop models is convergent then the density is $\epsilon$-convergent, because none of the rates of base models nor base non-models to all lassos is equal to zero.    

\begin{lemma}
	The density function of a property $\varphi$ is convergent, if and only if the rate of loop models is convergent.
\label{lem:loopconv}
\end{lemma}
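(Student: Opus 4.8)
The statement to prove is Lemma~\ref{lem:loopconv}: the density function $\srate_\varphi(n)$ converges if and only if the rate of loop models converges. The natural strategy is to exploit the four-way partition of the lasso space into base models, base non-models, loop models, and loop non-models. Write $\beta_m(n)$, $\beta_{\bar m}(n)$, $\ell_m(n)$, $\ell_{\bar m}(n)$ for the densities (ratios to $n\cdot|\Sigma|^n$) of these four classes at bound~$n$. Since the four classes partition all lassos, we have $\beta_m(n)+\beta_{\bar m}(n)+\ell_m(n)+\ell_{\bar m}(n)=1$ for every $n$, and by definition $\srate_\varphi(n)=\beta_m(n)+\ell_m(n)$. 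From the discussion preceding the lemma (formalizing Figure~\ref{fig:classes}), both $\beta_m(n)$ and $\beta_{\bar m}(n)$ are monotonically increasing and bounded, hence convergent; call their limits $b_m$ and $b_{\bar m}$.

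The forward direction ($\srate_\varphi$ convergent $\Rightarrow$ loop-model rate convergent) is then immediate: $\ell_m(n)=\srate_\varphi(n)-\beta_m(n)$ is the difference of two convergent sequences, so it converges (to $\usrate_\varphi - b_m$). The backward direction ($\ell_m$ convergent $\Rightarrow$ $\srate_\varphi$ convergent) is equally direct from the same identity: $\srate_\varphi(n)=\beta_m(n)+\ell_m(n)$ is a sum of convergent sequences. So the core of the argument is really just the algebraic identity plus the already-established convergence of $\beta_m$, and the lemma follows.

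**What needs care.** The one genuine obligation is to justify cleanly that $\beta_m(n)$ (and $\beta_{\bar m}(n)$) really do converge — i.e., to make the informal argument from the paragraph before the lemma rigorous. The key claim is that the \emph{cardinality} of base $n$-models grows by at least the universal growth factor $\varsigma_\top(n)=|\Sigma|\cdot\frac{n+1}{n}$: given a base $n$-model $(u,v)$ with $u\cdot v\in\gp(\varphi)$, any one-letter extension $u\cdot v\cdot a$ is again a good prefix (good prefixes are extension-closed), and one can attach any of the $n+1$ loop positions to obtain $(n+1)$ distinct base $(n+1)$-models per choice of $a$, giving the factor $|\Sigma|\cdot(n+1)$ against the $n$ bases underlying the $n$-models. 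By Proposition~\ref{prop:rateCompare}(1)--(2) this makes $\beta_m(n)$ non-decreasing; it is bounded above by~$1$; hence it converges. The symmetric argument (bad prefixes are also extension-closed) handles $\beta_{\bar m}(n)$. I would state this as a short sub-claim and then assemble the two directions in two lines each. I do not expect any real obstacle here — the only thing to be a little careful about is the bookkeeping of how many $(n+1)$-lassos a given $n$-lasso maps to, which is exactly the computation already sketched for $\top$ in the text, so I would simply invoke that reasoning rather than redo it.
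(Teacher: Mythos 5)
Your proposal is correct and follows essentially the same route as the paper: the paper's proof also writes $\srate_\varphi(n)$ as the sum of the base-model rate and the loop-model rate, invokes the convergence of the base-model rate (established in the discussion of the four lasso classes via the universal growth factor), and concludes both directions from that identity. Your additional care in justifying the monotonicity of the base-model rate is exactly the argument the paper gives in the paragraph accompanying Figure~\ref{fig:classes}, so nothing is missing and nothing is genuinely different.
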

\begin{proof}
	The density function can be defined as the sum of the two rates of base models and loop models. Because the rate of base models is always convergent, it follows that the density function is convergent if and only if the rate of loop models is convergent. \qed
\end{proof}
With the same argumentation the rate of loop non-models plays the same role as the one for loop models.
 
\begin{lemma}
For a given property $\varphi$, the rate of loop models is convergent if and only if the rate of loop non-models is convergent. 	
\end{lemma}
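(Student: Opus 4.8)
The plan is to decompose the whole space of lassos into the four classes introduced above and to track how the four corresponding rates behave. Fix a bound $n$ and write $b_n$, $g_n$, $\ell_n$, and $m_n$ for the rates (with respect to the total number $n\cdot|\Sigma|^n$ of length-$n$ lassos) of base non-models, base models, loop non-models, and loop models, respectively. These four numbers sum to $1$ for every $n$, since every lasso falls into exactly one of the four classes. By the discussion preceding Lemma~\ref{lem:loopconv} (made precise via Proposition~\ref{prop:rateCompare}), the rates $b_n$ and $g_n$ are monotonically increasing in $n$ and hence convergent; call their limits $\beta$ and $\gamma$. From $b_n + g_n + \ell_n + m_n = 1$ we get $\ell_n + m_n = 1 - b_n - g_n$, and the right-hand side converges to $1-\beta-\gamma$. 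This already shows that the \emph{sum} $\ell_n + m_n$ is convergent; the remaining task is to argue that each summand converges if and only if the other does.

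Next I would observe that $\ell_n = (1 - b_n - g_n) - m_n$. If the rate of loop models $m_n$ converges to some limit $\mu$, then $\ell_n$ converges to $(1-\beta-\gamma) - \mu$, since it is the difference of two convergent sequences. Symmetrically, if $\ell_n$ converges, then $m_n = (1-b_n-g_n) - \ell_n$ converges as the difference of two convergent sequences. This is the entire argument: the equivalence follows purely from the fact that the two loop-rates always add up to a convergent sequence, namely $1$ minus the sum of the two (already-known-to-be-convergent) base-rates.

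The one point that needs care — and which I regard as the only genuine obstacle — is justifying cleanly that the four classes really do partition \emph{all} length-$n$ lassos, and in particular that a single lasso cannot be counted twice. A lasso $(u,v)$ with $u\cdot v$ a bad-prefix is necessarily a non-model (by definition of bad-prefix), so it cannot simultaneously be a model with $u\cdot v$ a good-prefix; conversely a good-prefix base forces a model. A model whose base is not a good-prefix is exactly a loop model, and a non-model whose base is not a bad-prefix is exactly a loop non-model, and these two ``loop'' classes are disjoint because one consists of models and the other of non-models. Hence every lasso lies in exactly one class, the identity $b_n+g_n+\ell_n+m_n=1$ holds for all $n$, and the equivalence claimed in the lemma follows. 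Finally, I would note for completeness that combining this lemma with Lemma~\ref{lem:loopconv} yields the symmetric statement that the density function of $\varphi$ converges if and only if the rate of loop non-models converges, as already remarked informally in the text.
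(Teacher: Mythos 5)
Your proof is correct and follows essentially the same route as the paper's: the paper argues via Lemma~\ref{lem:loopconv} that convergence of the loop-model rate gives convergence of the density, hence of the non-model rate, hence (subtracting the always-convergent base non-model rate) of the loop non-model rate, which is exactly your observation that $\ell_n + m_n = 1 - b_n - g_n$ converges. The only difference is that you inline the decomposition rather than factoring it through the preceding lemma.
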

\begin{proof}
	From Lemma~\ref{lem:loopconv} we know that when the rate of loop models is convergent then the density of $\varphi$ is also convergent. This means that the rate of non-models is convergent, and as the rate of base non-models is always convergent, then so is the rate of loop non-models. 
	
	With  analogous  reasoning we can show that the convergence of the rate of loop non-models implies the convergence of the rate of loop models. \qed
\end{proof}

We show now an example of certain types of  not $\omega$-regular properties, that have a non-convergent rate of loop models, and thus a non-convergent density function. 

\begin{theorem}
\label{theo:nonconvprop}
	There is a linear-time property with a non-convergent density function. 
\end{theorem}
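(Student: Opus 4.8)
The plan is to construct a single linear-time property whose density function $\srate_\varphi(n)$ oscillates between two distinct values as $n$ ranges over two interleaved infinite sets of bounds. By Lemma~\ref{lem:loopconv}, it suffices to make the rate of loop models oscillate, so I will design $\varphi$ to have no bad-prefixes and no good-prefixes at all (hence no base models or base non-models), so that the entire density is carried by loop models; this isolates the phenomenon and keeps the bookkeeping clean. A natural candidate over a one-letter-style encoding is a property of the form ``the loop length divides (or does not divide) some function of the base length,'' but more robustly I would take $\varphi$ to depend only on the \emph{shape} of the lasso, e.g. over $\AP = \{p\}$ let $\varphi$ be the set of words $\sigma$ such that the number of positions before the ``eventually stable'' region has some parity property — something like $\LTLdiamond\LTLsquare p$ restricted by a numeric condition encoded via nesting of $\next$ that switches behavior depending on whether $n$ falls in $[2^{2k}, 2^{2k+1})$ versus $[2^{2k+1}, 2^{2k+2})$. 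The key is: choose two arithmetic/growth regimes $A, B \subseteq \nats$ partitioning the tail of $\nats$, each infinite, and a property that on bounds in $A$ admits roughly a $c_A$-fraction of lassos as loop models and on bounds in $B$ admits roughly a $c_B \ne c_A$ fraction.

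The concrete steps I would carry out are: (1) fix the alphabet and write down $\varphi$ explicitly as a set of ultimately periodic words, verifying directly from the definitions in the Preliminaries that $\bp(\varphi) = \gp(\varphi) = \emptyset$ (every finite word extends both to a model and to a non-model), so every lasso is a loop model or a loop non-model; (2) for a bound $n$, count $\#_\varphi(n)$ by summing over loop positions $i$ and loop lengths $n-i+1$, exactly in the style of the worked examples for $q\LTLrelease p$ and $\LTLdiamond\LTLsquare p$ in the text, obtaining a closed form for $\srate_\varphi(n) = \#_\varphi(n)/(n\cdot|\Sigma|^n)$; (3) show this closed form, evaluated along a sequence $n_k \in A$, tends to a limit $c_A$, and along a sequence $n_k \in B$ tends to $c_B \ne c_A$; (4) conclude that $\lim_{n\to\infty}\srate_\varphi(n)$ does not exist, i.e. $\varphi$ is $\bot$-convergent, which is the claim. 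I would also remark that $\varphi$ is not $\omega$-regular — this is automatic, since Theorem (stated later in the paper) guarantees $\omega$-regular properties are convergent, but for self-containment one can note the non-regularity follows because the set of loop lengths realized as models is not eventually periodic.

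The main obstacle will be step (2)–(3): getting the counting to produce genuinely \emph{oscillating} ratios rather than ratios that merely converge slowly, and in particular ensuring the ``switching'' condition on $n$ does not get washed out by the normalization factor $n\cdot|\Sigma|^n$ and by the geometric weighting of loop positions (recall that in $\LTLdiamond\LTLsquare p$ the sum $\sum_i (2^\AP)^{i-1}$ is dominated by its last term, so densities there are governed almost entirely by lassos with the loop at the very end). The trick is to make the property's acceptance condition sensitive precisely to the \emph{dominant} term — e.g. to whether the loop occupies (almost) the whole word and whether its length $n$ lies in $A$ or $B$ — so that the ratio is, up to lower-order corrections, the indicator of $n \in A$ (giving values near $1$) versus $n \in B$ (giving values near some $c < 1$, or near $0$). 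A safe choice is to let $\varphi$ say ``$p$ holds everywhere, and if the word is eventually-$p$-stable-from-the-start then [numeric condition on $n$],'' engineered so that $\srate_\varphi(n)$ is within $o(1)$ of $1$ for $n\in A$ and within $o(1)$ of $\tfrac12$ for $n\in B$; then $\limsup = 1 \ne \tfrac12 = \liminf$ and we are done. Verifying the $o(1)$ bounds is a routine geometric-series estimate once the combinatorial setup is pinned down.
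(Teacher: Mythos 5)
Your high-level architecture does match the paper's proof: a prefix-independent liveness property with $\bp(\varphi)=\gp(\varphi)=\emptyset$, so that by Lemma~\ref{lem:loopconv} everything hinges on the rate of loop models, with the oscillation driven by the cyclic lassos (loop at the first position) and controlled by an alternation of intervals of bounds. However, there is a genuine gap: you never actually produce a linear-time property, and both concrete candidates you float are defective. A linear-time property is a set of infinite words, so its definition cannot refer to the bound $n$ of a lasso representing the word --- the same word $u\cdot v^\omega$ is induced by lassos of many different lengths --- yet your ``numeric condition on $n$'' that ``switches behavior depending on whether $n$ falls in $[2^{2k},2^{2k+1})$'' does exactly that; and your ``safe choice'' beginning with ``$p$ holds everywhere'' is a safety property with a non-empty set of bad-prefixes, contradicting your own setup. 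The paper's construction resolves precisely this difficulty: the property constrains a feature of the \emph{word} --- the period $\delta$ with which $\{a\}$ eventually recurs must lie in a sparse union of intervals $\bigcup_i[c_i,d_i)$ --- and the dependence on $n$ then emerges combinatorially, because a cyclic lasso of length $n$ realizes period $\delta=n$ and so is (generically) a model exactly when $n$ lies in one of the intervals.

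There is also a quantitative point you gloss over. Cyclic lassos form only a $\frac{1}{n}$-fraction of all $n$-lassos, so a property sensitive only to them cannot move the density by a constant at a single bound; the per-bound contribution is $O(1/n)$ and must be \emph{accumulated} over long runs of bounds (note $\card{\varphi}{n}\approx|\Sigma|\cdot\card{\varphi}{n-1}+C_n$ with $C_n$ the cyclic model count, hence $n\cdot\srate_\varphi(n)\approx(n-1)\cdot\srate_\varphi(n-1)+C_n/|\Sigma|^n$). Your hoped-for conclusion that $\srate_\varphi(n)$ is within $o(1)$ of $1$ on all of $A$ and within $o(1)$ of $\frac{1}{2}$ on all of $B$ is therefore unattainable for interleaved $A,B$; what one can get, and what the paper gets by choosing the $c_i,d_i$ to grow fast enough, is that the density climbs by the end of each interval $[c_i,d_i)$ and decays by the end of each gap $[d_i,c_{i+1})$, which suffices for $\limsup\ne\liminf$. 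Your plan is salvageable, but the missing steps (a bona fide word-level definition and the accumulation estimate) are exactly where the content of the theorem lies.
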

\begin{proof}
	We divide each of the lasso classes further into cyclic lassos, i.e., lassos, where the loop is at the first position of the lasso, and non-cyclic lassos, which cover the rest of lassos in a class. The property we present is a liveness property, where the classes of base models and base non-models are empty. We show that the rate of loop models is non-convergent and  we show that the reason is that the rate of cyclic loop models is non-convergent.
	
	We define a property $\varphi$  over the set of atomic propositions $\AP =\{a\}$ as follows:
        Let $c_1, c_2 \dots$ and $d_1, d_2 \dots$ be natural numbers such that $c_1 \leq d_1 <c_2 \leq d_2< \dots$ . A lasso is a model of $\varphi$ if eventually the letter $\{a\}$ is encountered at some position and there is a constant $\delta$ in one of the intervals $[c_i,d_i)$ for some  number $i \in \nats$ such that from then on, $\{a\}$ appears periodically every $\delta$ positions. 
        
        The number of non-cyclic loop $n$-models of $\varphi$ is equal to $|2^\AP|\cdot \#_\varphi(n-1)$, because we can extend each $(n-1)$-model $\varphi$  to an $n$-model by attaching any letter from $2^{\AP}$ to the first position of the $(n-1)$-model. 
       This means that the growth factor of the models of $\varphi$ is determined by the respective growth in the size of the sets of cyclic models for each bound. 
       
       Notice that the rate of cyclic models depends strongly on the  bound $n$. If $c_j \leq n < d_j$ for some $j$, then we have $|2^\AP|^n -1$ many cyclic models of length $n$, as we only need to have at least one position labeled with $\{a\}$ and the rate of models in this case is increasing. If $d_j \leq n <c_{j+1}$, then there are at most $|2^\AP|^{n-\lfloor{\frac{n}{h}}\rfloor}\cdot h$, where  $h$ is the largest allowed period in $n$. In this case, the rate of models is decreasing. This means that in all bounds $n$ that belong to some interval $[c_i,d_i)$ the density is increasing, and all bounds in intervals $[d_j,c_{j+1})$ the density of $\varphi$ is decreasing. 
       We can choose the numbers $c_1,c_2, \dots$ and $d_1,d_2 \dots $ in a way that the density increases in $c_{i+1}$ to a value larger than the one in $c_i$, and decreases in $d_{i+1}$ to a value smaller than in $d_i$. In this way, the density function is oscillating and non-convergent. 
\qed
\end{proof}


\subsection{Density of $\omega$-Regular Properties}
In this section we show how  to compute the density of $\omega$-regular properties given by non-deterministic B\"uchi and  deterministic parity automata. In the next section we show how these results can be adopted to compute the density of properties given as  LTL formulas.  

We start by showing the relationship between the density of a property $\varphi$ and the densities of the terminal SCCs of an  automaton representing $\varphi$. 

\begin{lemma}
The density of an $\omega$-regular property $\varphi$ given as a parity automaton $\mathcal A$ is greater than 0 if and only if $\mathcal A$ has a reachable terminal accepting strongly connected component.
\label{lem:maxSCC}  	
\end{lemma}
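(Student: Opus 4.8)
The statement is an "if and only if" about when the density of an $\omega$-regular property $\varphi$, represented by a parity automaton $\mathcal A$, is strictly positive. The natural strategy is to connect the combinatorial counting of $n$-models to a random-walk / probabilistic-reachability argument on $\mathcal A$. I would first fix a \emph{deterministic} parity automaton for $\varphi$ (or at least an unambiguous one, as the paper already hints it will use); determinism is convenient because then each lasso $(u,v)$ of length $n$ traces a unique finite path $q_0 \xrightarrow{u\cdot v} q_n$ in $\mathcal A$, and whether $u\cdot v^\omega \in \varphi$ is decided purely by the loop portion: the cycle read by $v$ from the state $\mathcal A$ reaches after $u$ must have even maximal color. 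So $\card{\varphi}{n}$ equals the number of pairs (finite run of length $n$, choice of loop-back position $i$) such that the cycle on positions $i\ldots n$ is accepting.

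\textbf{($\Leftarrow$) Existence of a reachable terminal accepting SCC implies positive density.} Suppose $\mathcal A$ has a reachable terminal accepting SCC $S$. Reach $S$ from the initial state by some fixed word $w_0$ of length $k_0$. Once inside $S$, since $S$ is terminal the automaton can never leave it, and since $S$ is strongly connected and accepting, "most" cycles through a state of $S$ are accepting — more precisely, I would argue that a constant fraction (bounded below independently of $n$) of the runs of length $n$ that start with $w_0$ end in $S$ with a loop position $i$ such that positions $i\ldots n$ form an accepting cycle. The cleanest way: pick a state $q^\ast \in S$ on which the maximal color of $S$ is attained (even, since $S$ is accepting); by strong connectivity there is a cycle $\pi$ at $q^\ast$ visiting $q^\ast$'s maximal-color state, of some fixed length $\ell$. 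Then for $n$ large, a positive-density set of length-$n$ runs has the shape $w_0 \cdot (\text{anything staying in } S) \cdot (\text{return to } q^\ast) \cdot \pi$, with the loop closed around the final copy of $\pi$; each such configuration contributes an accepting $n$-model. Counting these gives $\card{\varphi}{n} \ge c\cdot n\cdot |\Sigma|^n$ for a constant $c>0$ and large $n$ — essentially because the number of length-$n$ words whose run, after the prefix $w_0$, stays in $S$ is $\Theta(|\Sigma|^{n})$ (the transitions inside $S$ fan out with full branching, as $\mathcal A$ is complete and $S$ is terminal), and a constant fraction of loop-closures are accepting. Hence $\usrate_\varphi \ge c > 0$. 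One technical point: I should invoke the fact established earlier that for $\omega$-regular $\varphi$ the density \emph{exists} (this is asserted in the paper for $\omega$-regular properties), so "$\liminf > 0$" and "$\usrate_\varphi > 0$" coincide; alternatively the construction directly bounds every large $n$.

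\textbf{($\Rightarrow$) No reachable terminal accepting SCC implies density zero.} Contrapositive of the above. If no reachable terminal SCC is accepting, I claim $\card{\varphi}{n} = o(n\cdot|\Sigma|^n)$. The key observation: a run of length $n$ either (a) is still in a transient (non-terminal) part of the graph, or has only recently entered a terminal SCC, or (b) has been inside some terminal SCC $T$ for a long stretch. The number of runs of type (a) is $O(|\Sigma|^n / |\Sigma|^{\epsilon n})$-ish — really, the probability a uniformly random length-$n$ run has not yet been absorbed into a terminal SCC decays exponentially, a standard finite-Markov-chain fact, so these contribute a vanishing fraction even after multiplying by the $n$ loop choices. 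For runs of type (b), to be an $n$-model the loop on positions $i\ldots n$ must be an accepting cycle; but that cycle lies entirely within the terminal SCC $T$ the run currently occupies (a cycle cannot span two SCCs), and since $T$ is \emph{not} accepting, its maximal color is odd, so \emph{no} cycle inside $T$ is accepting. Therefore type-(b) runs contribute \emph{zero} $n$-models. Summing, $\card{\varphi}{n}$ is bounded by $n$ times the number of type-(a) runs, which is $o(n\cdot|\Sigma|^n)$, giving $\usrate_\varphi = 0$.

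\textbf{Main obstacle.} The delicate part is the quantitative lower bound in the ($\Leftarrow$) direction — making precise that a \emph{constant} fraction of the branching inside a terminal accepting SCC yields accepting loop-closures, uniformly in $n$, rather than a fraction that decays (e.g. because the maximal-color state might be hard to revisit). I expect to handle this by a pumping/periodicity argument: fix a single closed walk $\pi$ through the top-color state, observe it can be appended to any sufficiently long run that currently sits at its start state $q^\ast$, and that the set of length-$(n-\ell)$ runs ending at $q^\ast$ inside $S$ is already $\Theta(|\Sigma|^{n})$ by aperiodicity/strong-connectivity of the restriction of $\mathcal A$ to $S$ (or, if $S$ has a periodicity obstruction, by averaging over a bounded window of lengths, which suffices since we only need $\liminf > 0$ once we know the limit exists). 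A secondary nuisance is the non-deterministic-B\"uchi case mentioned in the lemma's generality: there I would either pre-determinize, or work with an unambiguous automaton so that the "unique run" bookkeeping still goes through; the paper's surrounding text suggests the unambiguous construction is the intended vehicle, so I would reduce to that case first.
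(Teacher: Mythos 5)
Your overall strategy (absorption into terminal SCCs plus an analysis of which loop-closures are accepting) is the same as the paper's, which argues probabilistically following Finkbeiner--Schewe: a long random trajectory inside a terminal SCC visits every state of the SCC, in particular the maximal-color state, with probability tending to $1$. However, as written your proof has two genuine gaps, and both stem from treating the loop of the lasso too coarsely.

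First, in the ($\Rightarrow$) direction your claim that ``since $T$ is not accepting, its maximal color is odd, so \emph{no} cycle inside $T$ is accepting'' is false. Under the paper's definition an SCC is non-accepting when its \emph{highest} color is odd, but a cycle inside $T$ that happens to avoid the highest-color state can perfectly well have even maximal color and hence be accepting; for example, a terminal SCC $\{q_1,q_2\}$ with $c(q_1)=2$, $c(q_2)=3$ admits the accepting cycle that stays on $q_1$. So type-(b) runs do \emph{not} contribute zero $n$-models. What is true, and what the paper's probabilistic argument is really delivering, is that the fraction of loop words of length $m$ whose induced cycle avoids the top-color state decays exponentially in $m$; summing $|\Sigma|^{n-m}\cdot O(|\Sigma|^m c^m)$ over $m$ gives $O(|\Sigma|^n)=o(n\cdot|\Sigma|^n)$, which is the bound you actually need. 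Second, in the ($\Leftarrow$) direction your explicit construction closes the loop around a single fixed cycle $\pi$ of constant length $\ell$, so it uses only \emph{one} loop position out of $n$; it therefore produces only $\Theta(|\Sigma|^{n})$ models, giving a density contribution of $\Theta(1/n)\to 0$ rather than a positive constant. The $\liminf$/bounded-window remark does not repair this: to obtain $\card{\varphi}{n}\ge c\cdot n\cdot|\Sigma|^n$ you must count lassos over $\Theta(n)$ distinct loop positions, e.g.\ all lassos that reach $S$ and whose loop has length at least $n/2$, and then argue (by the same exponential-decay estimate as above) that all but a vanishing fraction of these long loops visit the maximal-color state and are accepting. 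Both gaps are fixable, but the fix is precisely the quantitative statement about long loops in terminal SCCs that the paper's proof supplies and your write-up currently lacks.
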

\begin{proof}
We prove the lemma along the steps of \cite{Finkbeiner+Schewe/06/ProbEnv}. Let $\mathcal A$ be defined over the alphabet $2^\AP$ for a set of atomic proposition $\AP$. Let $S$ be an accepting terminal SCC in $\mathcal A$ with $n$ states (remember that $\mathcal A$ is complete, thus $S$ allows transitions for all letters in $2^\AP$ in each state).
The probability of choosing a transition with label $\alpha \in 2^\AP$ is equal to $\epsilon = \frac{1}{|2^\AP|}$ from any state in $S$. Let $s$ be a state in $S$. The probability of not reaching $s$ from any other state in $S$ in $n$ steps is at most  $1-\epsilon^n<1$. This means for an infinite trace in $S$, the probability of not seeing  $s$ again from every position of the trace is equal to 0. Thus, the probability of choosing an infinite run  $\sigma$ in $S$ such that $s \not \in \textbf{Inf}(\sigma)$ is also equal to 0. This  is in particular true for the state $s_{\max}$ with the maximum color in $S$.  This implies that an  infinite run $\sigma$ in $S$ with $\textbf{Inf}(\sigma)$ equal to the set of states of $S$, has probability 1. 
Because $s_{\max}$ is even,  it follows that the probability of a lasso with an accepting run in $S$ converges to 1, when the length of the lasso tends to infinity.    

If $\mathcal A$ has a reachable terminal accepting SCC $S$, then it is reached by at least one finite prefix with positive probability. The density of $\varphi$ is then at least equal to the probability of choosing this prefix.
If $\mathcal A$ does not have a terminal accepting SCC, then the rate of models of $\varphi$  converges to 0, because  the probability of staying infinitely in an accepting cycle in the automaton is 0. \qed 
\end{proof}

Using the previous lemma we show the complexity of the following qualitative problems for the density.
\begin{theorem}
\label{theo:denlarg0}
Let $\varphi$ be an $\omega$-regular property given by a deterministic parity automaton. 
	The problem of checking whether $\usrate_\varphi>0$ is \nlogspace-complete.  
\end{theorem}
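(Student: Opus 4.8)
The plan is to reduce the question to a structural condition on the automaton via Lemma~\ref{lem:maxSCC}, and then to show that this condition is both checkable in and hard for \nlogspace{} using directed graph reachability. By Lemma~\ref{lem:maxSCC}, for a deterministic parity automaton $\mathcal A$ defining $\varphi$ we have $\usrate_\varphi > 0$ if and only if $\mathcal A$ has a reachable terminal accepting SCC, so it suffices to prove that deciding the existence of such an SCC is \nlogspace-complete.

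For containment in \nlogspace, I would give a nondeterministic logarithmic-space procedure that guesses a state $q$ of $\mathcal A$ (an index, hence logarithmic space) together with the claim that $q$ is a maximal-colour state of a reachable terminal accepting SCC, and then verifies: (i) $q$ is reachable from the initial state; (ii) the SCC of $q$ is terminal; (iii) $c(q)$ is even; and (iv) no state $p$ with $q \rightsquigarrow p$ and $p \rightsquigarrow q$ has $c(p) > c(q)$, where $x \rightsquigarrow y$ means $y$ is reachable from $x$. Check~(i) is ordinary directed reachability, which is in \nlogspace. Check~(ii) says that every state reachable from $q$ also reaches $q$; its negation ``$\exists p.\ q \rightsquigarrow p$ and $p \not\rightsquigarrow q$'' lies in \nlogspace{} (guess $p$ and a witnessing path, and certify $p \not\rightsquigarrow q$ via the Immerman--Szelepcsényi inductive counting), so by \nlogspace{} $=$ co-\nlogspace{} check~(ii) is itself in \nlogspace. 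Check~(iv) is analogous, hence also in \nlogspace. If all four checks succeed, the SCC of $q$ is reachable, terminal, and has maximal colour $c(q)$, which is even, so it is a reachable terminal accepting SCC; conversely, if $\mathcal A$ has such an SCC $S$, then any $q\in S$ of maximal colour (even, since $S$ is accepting) makes all checks pass. Hence the procedure accepts for some guess exactly when $\mathcal A$ has a reachable terminal accepting SCC.

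For \nlogspace-hardness, I would reduce from directed $s$-$t$-reachability. Given a digraph $G=(V,E)$ and vertices $s,t$, build a deterministic complete parity automaton $\mathcal A_G$ over the alphabet $\{1,\dots,m\}$ with $m=\max(1,\max_{v\in V}\mathrm{outdeg}_G(v))$: the states are $V$ together with a fresh state $b$, the initial state is $s$, and for $v\in V$ with out-neighbours $v_1,\dots,v_k$ in a fixed canonical order we set $\delta(v,i)=v_i$ for $i\le k$ and $\delta(v,i)=b$ for $k<i\le m$; both $b$ and $t$ get self-loops on every letter (so $t$'s original outgoing edges are dropped, which does not affect reachability of $t$ from $s$); and we colour $c(t)=2$ and $c(u)=1$ for every other state $u$. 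This is logspace-computable. Since $t$ is the only even-coloured state and it is a self-loop sink, $\{t\}$ is the unique accepting SCC and it is terminal, whereas every other terminal SCC has maximal colour $1$; hence $\mathcal A_G$ has a reachable terminal accepting SCC iff $t$ is reachable from $s$ in $\mathcal A_G$. Finally, because transitions out of a $V$-state lead only to $G$-successors or to the trap $b$, and $t$ only has self-loops, a path from $s$ to $t$ exists in $\mathcal A_G$ iff one exists in $G$; this establishes the reduction.

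The delicate part is the hardness construction: the automaton must be made deterministic and complete without creating any spurious reachable terminal accepting SCC, which is why all ``extra'' letters are funnelled into a single rejecting trap and $t$ is turned into the unique even-coloured self-loop sink. On the membership side, the only non-routine ingredient is invoking \nlogspace{} $=$ co-\nlogspace{} to handle the ``terminal SCC'' and ``maximal colour'' checks, which are naturally expressed as co-\nlogspace{} conditions.
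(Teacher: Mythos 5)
Your proof is correct and follows essentially the same route as the paper: reduce via Lemma~\ref{lem:maxSCC} to the existence of a reachable terminal accepting SCC, verify that condition in \nlogspace{} using reachability queries and \nlogspace{} $=$ co-\nlogspace{} for the terminality and maximal-colour checks. Your explicit logspace reduction from $s$-$t$-reachability supplies the details of the lower bound, which the paper only asserts via a reduction from nondeterministic logspace Turing machines.
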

\begin{proof}
	As shown in Lemma~\ref{lem:maxSCC}, to check if the density of the property given by an automaton $\mathcal A$ is greater than 0, it suffices to check whether there is an accepting terminal strongly connected component in $\mathcal A$. 	We choose a state $q$ of $\mathcal A$ reachable from the initial state and apply the following procedure.  Iterating over all states $q'$ of the automaton, we check if $q'$ is reachable from $q$. If yes, we check if $q$ is reachable again from $q'$. If this is not the case, then $q$ is not a state in a terminal SCC in $\mathcal A$, and we choose a new state in $\mathcal A$ different than $q$  and repeat the whole procedure for the new state. Otherwise, if for each $q'$ reachable from $q$, there is a path leading back to $q$, then we have found a terminal SCC that contains $q$.  During the iteration we also save the highest color seen. If this color is even then $q$ is a state in a terminal accepting SCC in $\mathcal A$.	
	  If no terminal accepting SCC is found after iterating over all states of $\mathcal A$, then the density  is 0. 
	  
	Checking whether a state is reachable from another can be done in nondeterministic logarithmic space (the reachability problem in automata is in \nlogspace). Checking whether a state is not reachable from another can also be done in non-deterministic logarithmic space (the non-reachability problem in automata is in co-\nlogspace ~and  \nlogspace=co-\nlogspace).  In each iteration we only need to memorize the binary encoding of the  state $q$ and the current state $q'$ and the current highest color seen so far, which require in total an encoding of no more than logarithmically many bits in the size of the automaton.
	
	A matching lower bound is can be proven by a reduction from a nondeterministic logarithmic-space Turing machines. \qed
\end{proof}

\begin{theorem}
	Let $\varphi$ be an $\omega$-regular property given by a nondeterministic  B\"uchi automaton $\mathcal A$. 
	The problem of checking whether $\usrate_\varphi>0$ is P-complete.  
\end{theorem}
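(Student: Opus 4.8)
The plan is to show that ``$\usrate_\varphi>0$'' is equivalent to the existence of a certain component in a polynomial-size structure built from~$\mathcal A$, and then to establish a matching lower bound.

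\emph{Membership in P.} As in the proof of Lemma~\ref{lem:maxSCC}, $\usrate_\varphi$ equals the probability that an infinite word over $\Sigma = 2^{\AP}$, each of whose letters is drawn independently and uniformly, is accepted by $\mathcal A$; hence $\usrate_\varphi>0$ iff a random word admits an accepting run with positive probability. To deal with the nondeterminism I would read $\mathcal A$ together with this letter distribution as a Markov decision process $\mathcal M_{\mathcal A}$: in each step nature draws the next letter uniformly and a scheduler resolves the nondeterministic choice of successor, so that a scheduler is precisely an on-the-fly construction of a run. A random word then has an accepting run with positive probability iff some scheduler of $\mathcal M_{\mathcal A}$ visits an accepting state of $\mathcal A$ infinitely often with positive probability, which by the standard analysis of Markov decision processes holds iff $\mathcal M_{\mathcal A}$ has a reachable \emph{accepting end component}: a strongly connected set of states that is closed under nature's letter draws (for a suitable choice of successors) and that contains an accepting state. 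This is exactly the $\Sigma$-closed --- but possibly non-terminal --- analogue of the ``reachable terminal accepting SCC'' of Lemma~\ref{lem:maxSCC}; for a deterministic parity automaton the two notions coincide, which is why Theorem~\ref{theo:denlarg0} only needs \nlogspace. The maximal end components of $\mathcal M_{\mathcal A}$, and whether a reachable one is accepting, can be computed in polynomial time by the usual iterative end-component decomposition followed by a reachability test; as $\mathcal M_{\mathcal A}$ has size polynomial in $|\mathcal A|$, this places the problem in P.

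\emph{P-hardness.} I would reduce from the positive-probability B\"uchi problem for Markov decision processes, which is P-complete. Given an MDP with a B\"uchi target, encode it into an NBW $\mathcal A$ by taking the automaton states to be the MDP states, letting the nondeterministic transitions at a state correspond to the available actions, and distributing the letters of $\Sigma$ over the support of a chosen action so that a uniformly random letter realises that action's probabilistic branching --- only the supports matter for the qualitative question. Then the uniform product of $\mathcal A$ is, qualitatively, the given MDP, so $\usrate_\varphi>0$ holds iff the MDP admits a scheduler satisfying the B\"uchi condition with positive probability. Combined with the first part this yields P-completeness, and it shows in particular that the problem is not in \nlogspace unless those two classes coincide, in contrast with the deterministic case, while it stays in P rather than \pspace because, unlike for \textsc{LTL}, no exponential automaton construction is needed.

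\emph{Expected main obstacle.} The delicate direction is ``only if'': a positive measure of accepted words must force a reachable accepting end component, even though an accepting run of $\mathcal A$ may depend on the whole word whereas a scheduler depends only on the past. I would obtain it by following the subset-construction trajectory of a random word --- almost surely it is eventually trapped in a bottom SCC $\mathcal C$ of the (exponential, and used only in the analysis) subset automaton, inside which every reachable subset recurs infinitely often --- and arguing by a zero--one law that, conditioned on landing in a fixed such $\mathcal C$, either almost every word is accepted or almost none is; in the former case $\mathcal C$, together with a winning resolution of the B\"uchi condition inside it, yields the required end component. Carrying this out without ever mentioning the exponential subset automaton --- for instance through a breakpoint-style fixed point computed directly on $\mathcal M_{\mathcal A}$ --- is the part I expect to cost the most work, together with verifying that the P-hardness encoding is computable in logarithmic space. \qed
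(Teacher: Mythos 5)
Your reduction of ``$\usrate_\varphi>0$'' to the existence of a reachable accepting end component in $\mathcal M_{\mathcal A}$ is not sound, and the direction you yourself flag as delicate is exactly where it breaks: a word is accepted if \emph{some} run is accepting, and that run may depend on the entire future of the word, a form of clairvoyance that no scheduler (and no zero--one law on the subset automaton) can reproduce. Concretely, over $\Sigma=\{a,b\}$ take states $q$ (initial), $q^{+}$ (accepting), $p_a$, $p_b$, $d$ with $\delta(q,\alpha)=\delta(q^{+},\alpha)=\{p_a,p_b\}$ for both letters (the run guesses the \emph{next} letter), $\delta(p_a,a)=\{q^{+}\}$, $\delta(p_a,b)=\{d\}$, $\delta(p_b,b)=\{q\}$, $\delta(p_b,a)=\{d\}$, and $d$ a rejecting sink. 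Each word has exactly one surviving run, the clairvoyant one, and it is accepting on a set of lassos of density $1$; yet any end component meeting $\{q,q^{+},p_a,p_b\}$ would have to contain $d$, so $\{d\}$ is the only end component and your algorithm answers ``density $0$''. For comparison, the paper's own proof is a one-liner that checks for a terminal accepting SCC as in Lemma~\ref{lem:maxSCC} and reduces from the circuit value problem; that criterion fails already on the two-state automaton that accepts every word while always offering an escape into a rejecting sink, an example your end-component refinement does handle --- but not the one above. Your hardness reduction from positive-probability B\"uchi MDPs has the same soundness gap in one direction (scheduler value $0$ does not force clairvoyant acceptance probability $0$).

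The obstacle is not repairable by a better polynomial-time criterion, because the statement itself appears to be false. Given an NFA $B$ over $\Sigma$, build a polynomial-size B\"uchi automaton over $\Sigma\cup\{\#\}$ accepting exactly the words with infinitely many $\#$'s in which every maximal $\#$-free block between consecutive $\#$'s lies in $L(B)$: restart a nondeterministic simulation of $B$ at each $\#$, move to a rejecting sink if a block ends outside the accepting states of $B$, and make the restart states B\"uchi-accepting. If $L(B)=\Sigma^{*}$ the density is $1$; otherwise, fixing $u\notin L(B)$, the factor $\#u\#$ occurs in the base of almost every sufficiently long lasso, so the density is $0$. Hence ``$\usrate_\varphi>0$'' for nondeterministic B\"uchi automata is \pspace-hard by reduction from NFA universality and cannot be in P unless P $=$ \pspace. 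The polynomial (indeed \nlogspace) structural criteria really do require determinism or unambiguity, as in Theorems~\ref{theo:denlarg0} and~\ref{theo:compconv}; for general nondeterministic B\"uchi automata you should expect the same \pspace\ bound as for ``$\usrate_\varphi<1$'' rather than try to salvage a P-time end-component argument.
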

\begin{proof}
	We check whether $\mathcal A $ has a terminal accepting SCC $S$. Finding such an SCC can be done in polynomial time.
A matching lower bound is achieved by a log-space reduction from the \emph{circuit value} problem.\qed
\end{proof}

We turn now to the problem of checking if $\usrate_\varphi <1$.

\begin{theorem}
\label{theo:densm1deter}
Let $\varphi$ be an $\omega$-regular property given by a deterministic parity automaton $\mathcal A$. 
	The problem of checking whether $\usrate_\varphi<1$ is \nlogspace-complete.

\end{theorem}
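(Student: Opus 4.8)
The plan is to reduce the problem to the one already solved in Theorem~\ref{theo:denlarg0} by passing to the complement. First I would record the elementary identity that, for every bound $n$, each lasso of length $n$ induces a word that is either a model or a non-model of $\varphi$, so $\card{\varphi}{n} + \card{\overline\varphi}{n} = n\cdot|\Sigma|^n$ and hence $\srate_\varphi(n) + \srate_{\overline\varphi}(n) = 1$ for all $n$. Consequently $\usrate_\varphi$ exists iff $\usrate_{\overline\varphi}$ exists, and in that case $\usrate_\varphi < 1$ if and only if $\usrate_{\overline\varphi} > 0$.

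The key observation is that $\overline\varphi$ is again recognized by a deterministic parity automaton of the same size: given $\mathcal A = (Q,Q_0,\delta,c)$, let $\overline{\mathcal A} = (Q,Q_0,\delta,c')$ with $c'(q) = c(q)+1$. Since $\mathcal A$ is deterministic and complete, every word has a unique run, and this run is accepting in $\overline{\mathcal A}$ exactly when its highest recurring color is odd in $\mathcal A$, i.e.\ exactly when it is rejecting in $\mathcal A$; so $\overline{\mathcal A}$ recognizes $\overline\varphi$. Moreover $\overline{\mathcal A}$ has the same underlying graph as $\mathcal A$, hence the same terminal SCCs, and a terminal SCC $S$ is accepting in $\overline{\mathcal A}$ iff $\max\{c(q)\mid q\in S\}$ is odd. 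Applying Lemma~\ref{lem:maxSCC} to $\overline{\mathcal A}$, we obtain: $\usrate_\varphi < 1$ iff $\mathcal A$ has a reachable terminal SCC whose highest color is odd. This gives the \nlogspace\ upper bound directly: run the procedure from the proof of Theorem~\ref{theo:denlarg0} verbatim, but accept when the recorded highest color of a discovered reachable terminal SCC is odd instead of even; the workspace (two state indices, one color, the reachability and non-reachability subroutines, using $\nlogspace = \mathrm{co}\text{-}\nlogspace$) is unchanged.

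For the matching lower bound I would give a logspace reduction from directed $s$--$t$ reachability, which is \nlogspace-complete. From a graph with source $s$ and target $t$ I build a complete deterministic parity automaton whose states are the vertices together with two self-looping sinks $q_0$ (even color) and $q_1$ (odd color); the letter at a vertex following its $k$-th out-edge goes to the corresponding successor, the transition out of $t$ is redirected to $q_1$, and every remaining letter at every vertex goes to $q_0$. Then every vertex has an escape edge to $q_0$, so the only terminal SCCs are $q_0$ (reachable from every state) and $q_1$ (reachable iff some reachable vertex points to it, i.e.\ iff $t$ is reachable from $s$). Hence $\mathcal A$ has a reachable terminal SCC of odd highest color iff $t$ is reachable from $s$, i.e.\ iff $\usrate_\varphi < 1$; the construction is clearly logspace-computable. (Equivalently, one can reuse the Turing-machine reduction behind Theorem~\ref{theo:denlarg0} with the even and odd sinks swapped.)

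The only genuinely delicate point is the complementation step: one must use determinism (and completeness) of $\mathcal A$ so that a single run is available to argue that shifting all colors by one complements the language, and one must note that this operation leaves the terminal-SCC structure intact so that Theorem~\ref{theo:denlarg0} transfers. Everything else — the upper-bound algorithm and the \nlogspace-hardness reduction — is routine.
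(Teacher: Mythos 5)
Your proof is correct and follows essentially the same route as the paper: both reduce the question to detecting a reachable terminal SCC of odd maximal color (the paper phrases this directly as a reachable \emph{non-accepting} terminal SCC, you arrive at the identical test via the color-shift complementation and Lemma~\ref{lem:maxSCC}), and both then reuse the \nlogspace{} procedure of Theorem~\ref{theo:denlarg0}, with determinism of $\mathcal A$ being the crucial hypothesis in each case. Your explicit $s$--$t$-reachability reduction for the lower bound is more detailed than the paper's one-line appeal to a Turing-machine reduction; just make sure the alphabet has one more letter than the maximal out-degree, so that every original vertex really does have an escape transition to $q_0$ and no terminal SCC survives among the graph vertices.
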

\begin{proof}
Following the idea of Theorem~\ref{theo:denlarg0} we can check if $\mathcal A$ has a terminal \textbf{non-accepting} SCC in nondeterministic logarithmic space. Because the automaton is deterministic, any lasso that has a run in this SCC is a non-model of $\varphi$. If $\mathcal A$ contains such an SCC, then the rate of non-models is greater than 0 (at least as equal as the probability of reaching the non-accepting SCC), and thus the density is less than 1. If no such SCC is found, then the density is equal to~1. 

A matching lower bound can be shown via a reduction from a nondeterministic logarithmic-space Turing machine. 
\end{proof}

\begin{theorem}
\label{theo:densm1nondeter}
Let $\varphi$ be an $\omega$-regular property given by a nondeterministic B\"uchi automaton $\mathcal A$. 
	The problem of checking whether $\usrate_\varphi<1$ is \pspace-complete.
\end{theorem}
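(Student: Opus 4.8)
The plan is to follow the strategy of Theorem~\ref{theo:densm1deter} --- decide $\usrate_\varphi<1$ by searching for a reachable terminal accepting SCC in an automaton for the complement $\overline\varphi$ --- but to carry out this search \emph{implicitly}, since complementing a nondeterministic B\"uchi automaton incurs an exponential blow-up. The crucial observation will be that, although the complement automaton is exponentially large, its states are polynomially describable and its transition relation is polynomial-time computable, so the required SCC analysis still fits into polynomial space.

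\textbf{Membership in \pspace.} Since $\varphi$ and $\overline\varphi$ are both $\omega$-regular, their densities exist and satisfy $\srate_\varphi(n)+\srate_{\overline\varphi}(n)=1$ for every $n$ (each lasso is a model or a non-model), so $\usrate_\varphi<1$ iff $\usrate_{\overline\varphi}>0$. Let $\mathcal A^c$ be a nondeterministic B\"uchi automaton for $\overline\varphi$ obtained from $\mathcal A$ by the standard rank-based complementation; it has $2^{O(n\log n)}$ states, where $n=|\mathcal A|$, but each state has a description (a level ranking) of size polynomial in $n$, its initial states and the relation ``$s'$ is an $\alpha$-successor of $s$'' are recognizable in polynomial time from such descriptions, and the recursion depth in Savitch's reachability procedure on this graph is polynomial. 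Hence reachability between states of $\mathcal A^c$ lies in \pspace, and, as \pspace\ is closed under complement, so does non-reachability. By Lemma~\ref{lem:maxSCC} applied to $\overline\varphi$ and $\mathcal A^c$, it remains to decide whether $\mathcal A^c$ has a reachable terminal accepting SCC; this is done by guessing a state $q$ and verifying, all within \pspace, that (i) $q$ is reachable from the initial states; (ii) the SCC of $q$ is terminal, i.e.\ the set of states reachable from $q$ is strongly connected, equivalently there is \emph{no} state $q'$ with $q\to^* q'$ and $q'\not\to^* q$ --- the existence of such a $q'$ is in $\npspace=\pspace$, hence so is its absence; and (iii) the SCC of $q$ is accepting, i.e.\ some accepting state is reachable from $q$ (by (ii), everything reachable from $q$ lies in its SCC). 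The outer guess of the polynomially-describable $q$ together with these \pspace\ checks stays within $\npspace=\pspace$.

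\textbf{\pspace-hardness.} I would reduce from universality of nondeterministic finite automata on finite words, which is \pspace-complete. Given an NFA $N$ over $\Sigma$, fix a fresh symbol $\#\notin\Sigma$, put $\Gamma=\Sigma\cup\{\#\}$, and build in polynomial time a nondeterministic B\"uchi automaton $\mathcal A$ over $\Gamma$ accepting exactly those words that contain infinitely many $\#$ and all of whose maximal $\Sigma$-blocks occurring between two consecutive $\#$'s belong to $L(N)$: $\mathcal A$ guesses, block by block, an accepting run of $N$ on the current block, resets at each $\#$, and takes ``a $\#$ has just been read while $N$ sits in an accepting state'' as its B\"uchi condition, so words with only finitely many $\#$ are rejected. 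If $L(N)=\Sigma^*$, then every lasso of length $m$ whose loop contains a $\#$ is an $m$-model, and a routine count shows the lassos of length $m$ whose loop avoids $\#$ form an $O(1/m)$ fraction, so $\usrate_\varphi=1$. If $L(N)\neq\Sigma^*$, fix $z\in\Sigma^*\setminus L(N)$; then every lasso whose base contains the factor $\#z\#$ is a non-model, and for all large $m$ these make up at least a fixed positive fraction (of order $|\Gamma|^{-|z|-2}$) of all lassos of length $m$, so $\usrate_\varphi<1$. Hence $\usrate_\varphi<1$ iff $L(N)\neq\Sigma^*$, which gives \pspace-hardness.

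The main difficulty is the membership part: one must not materialize $\mathcal A^c$, so the whole terminal-accepting-SCC analysis has to be performed on the fly over an exponentially large transition graph. It goes through precisely because that graph has polynomially describable vertices and a polynomial-time edge relation, so all the reachability tests --- including the single quantifier alternation hidden in ``terminal SCC'' --- collapse to \pspace\ via Savitch's theorem and the closure of \pspace\ under complementation.
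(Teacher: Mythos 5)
Your proof is correct, and it takes a genuinely different route from the paper on both halves of the theorem. For membership, the paper determinizes $\mathcal A$ via Safra's construction into an exponentially large deterministic parity automaton whose states (Safra trees) are polynomially describable, and then runs the \nlogspace\ procedure of Theorem~\ref{theo:densm1deter} on the fly, searching for a reachable terminal \emph{non-accepting} SCC; you instead complement $\mathcal A$ by the rank-based construction, note that $\usrate_\varphi<1$ iff $\usrate_{\overline\varphi}>0$, and search the complement on the fly for a reachable terminal \emph{accepting} SCC via Lemma~\ref{lem:maxSCC}. Both arguments hinge on the same key insight --- the exponential automaton has polynomial-size state descriptions and a polynomial-time edge relation, so reachability, non-reachability, and hence the full terminal-SCC test collapse into \pspace\ --- and your write-up makes the quantifier structure of ``terminal SCC'' explicit where the paper leaves it implicit. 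Two small caveats on your version: you need Lemma~\ref{lem:maxSCC} in its nondeterministic form (which the paper itself uses elsewhere), and you should note that $\mathcal A^c$ can be taken complete so that the lemma applies. For hardness, the paper only asserts a generic Meyer--Stockmeyer-style reduction from polynomial-space Turing machines, whereas you give a concrete, checkable reduction from NFA universality; your block construction and the two density estimates (the $O(1/m)$ fraction of lassos whose loop avoids $\#$, and the $|\Gamma|^{-|z|-2}$ fraction of bases containing $\#z\#$) are sound. Your route buys a self-contained and verifiable lower bound at the cost of a somewhat longer argument; the paper's buys brevity but defers essentially all details to the reader.
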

\begin{proof} 
Using  Safra's construction  \cite{1691236}, every nondeterministic parity automaton $\mathcal A$ can be transformed into a deterministic parity automaton $\mathcal D$ of size exponential in the size of $\mathcal A$. Each state of $\mathcal D$ is a Safra-tree over the states of $\mathcal A$. The size of a Safra-tree is equal to the size of $\mathcal A$ and we can distinguish exponentially many Safra-trees. 
In Theorem~\ref{theo:densm1deter} we presented a non-deterministic logarithmic-space algorithm over deterministic parity automata for checking whether there is an non-accepting terminal SCC. Instead of constructing the whole automaton $\mathcal D$ and checking the existence of such an SCC, we will do it on the fly as follows. 
 We can guess a run of the automaton $\mathcal D$ by stepwise guessing Safra-trees and checking if two succeeding trees are consistent with transition relation of $\mathcal D$. At some position we also guess that a current state $q$ of the run is one in a terminal SCC. As in the procedure of Theorem~\ref{theo:denlarg0} we check whether all successor states $q'$ allow a path from which $q$ can be reached again. For that we only need logarithmic space in the size of $\mathcal D$, thus, polynomial space in the size of $\mathcal A$. If a maximum color seen during the traversal of a path is even, then we have found an accepting terminal SCC that contains the state $q$.
	
	A matching lower bound can be achieved following the steps of \cite{Meyer:1972:EPR:1437899.1438639} by reducing a polynomial space-bound Turing machine $\mathcal M$ and a word $w$ to a non-deterministic parity automaton $A$ such that, $\mathcal M$ accept $w$ if and only if the density of $A$ is smaller 1. 
\end{proof}

Using the so far presented results we show now how to compute the density.
\begin{theorem}
	Computing the density $\usrate_{\varphi}$ for an $\omega$-regular property $\varphi$ can be done in polynomial time if $\varphi$ is given by an unambiguous parity automaton $\mathcal A$.
\label{theo:compconv}
\end{theorem}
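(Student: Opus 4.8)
The plan is to reduce the asymptotic density to a reachability probability on $\mathcal A$ under the uniform letter distribution, and then to compute that probability by solving a polynomially-sized system of linear equations; unambiguity is exactly what keeps this system non-degenerate and so keeps the whole computation polynomial (rather than forcing the exponential detour through determinization).

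First I would isolate the part of the lasso space that matters in the limit. Writing $h$ for any function with $h(n)\to\infty$ and $h(n)=o(n)$, the lassos $(u,v)$ of length $n$ with $|u|<h(n)$ or $|v|<h(n)$ number at most $2h(n)\cdot|\Sigma|^n$, an $o(1)$-fraction of all $n\cdot|\Sigma|^n$ lassos, so they do not affect $\usrate_\varphi$. For the remaining ``fat'' lassos both $|u|\to\infty$ and $|v|\to\infty$, and the base $uv$ is a uniformly random word of length $n$ drawn independently of the split point. Reading this word drives the complete automaton $\mathcal A$ along a path whose recurrent behaviour is governed by its strongly connected components, exactly as in the proof of Lemma~\ref{lem:maxSCC}: with probability tending to $1$ the run is absorbed, well before position $|u|$, into a reachable terminal SCC, and since $|v|\to\infty$ the period $v$ read once from inside such an SCC covers all of it with probability tending to $1$, so that on $uv^\omega$ the set of infinitely-often-visited states is precisely that terminal SCC. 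Hence $uv^\omega\in\varphi$ iff the absorbing terminal SCC is accepting, and
$$
\usrate_\varphi \;=\; \sum_{C}\ \Pr\!\left[\text{absorption in }C\right],
$$
the sum ranging over the reachable accepting terminal SCCs of $\mathcal A$, the probability taken over the uniform i.i.d.\ letter distribution. Equivalently, writing $\#_\varphi(n)\approx\sum_{\ell<n}N_\ell\cdot|\Sigma|^{n-\ell}$ with $N_\ell$ counting the length-$\ell$ prefixes from which a run can reach an accepting terminal SCC, and using that $N_\ell/|\Sigma|^\ell$ converges to the same probability, one obtains $\#_\varphi(n)\approx n\cdot\usrate_\varphi\cdot|\Sigma|^n$.

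Second I would compute this probability. If $\mathcal A$ is deterministic, $\mathcal A$ with the uniform distribution on $\Sigma$ is a finite Markov chain, its terminal SCCs are the bottom strongly connected components, and the absorption probabilities $\Pr[\text{absorption in }C]$ are the unique solution of the standard linear system $x_q=|\Sigma|^{-1}\sum_{a\in\Sigma}x_{\delta(q,a)}$ with $x_q=1$ for $q\in C$ and $x_q=0$ for $q$ in every other terminal SCC; this is solvable in polynomial time, and summing over accepting terminal $C$ gives $\usrate_\varphi$. When $\mathcal A$ is only unambiguous, one cannot read it off directly as a Markov chain, but unambiguity supplies the missing structure: every accepted word has exactly one accepting run, so the ``there is an accepting run'' indicator equals the \emph{number} of accepting runs, contributions of runs through distinct states do not overlap, and — following the line of argument of \cite{Finkbeiner+Schewe/06/ProbEnv} — one can still set up a linear system with one variable per state of $\mathcal A$ and coefficients determined by $\delta$ and $|\Sigma|$, whose solution gives, per state, the probability of lying on an accepting run trapped in $C$; unambiguity guarantees that the governing matrix is non-singular, so the system has a unique solution computable in polynomial time, and summing over $C$ finishes the proof.

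I expect the main obstacle to be the unambiguous case of the second step: showing that the probabilities above are well-defined despite non-determinism and, crucially, that they are captured by a \emph{polynomially-sized} linear system with a unique solution — the naive route of determinizing $\mathcal A$ first is an exponential blow-up and would only yield an exponential-time bound. The heart of the matter is a structural lemma to the effect that unambiguity forces the accepting recurrent behaviour of $\mathcal A$ to be rigid enough (an accepting run inside a terminal accepting SCC is unique, so the relevant linear recurrence behaves like that of a Markov chain), which also retroactively legitimizes the first step, where we silently identified ``the run'' with ``a run''. A secondary, routine point is that the parity (rather than B\"uchi) acceptance condition enters only through which terminal SCCs are declared accepting, which is decided by inspecting the maximal colour of each such SCC.
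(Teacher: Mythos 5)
Your proposal is correct and follows essentially the same route as the paper's proof: reduce $\usrate_\varphi$ to the sum, over reachable accepting terminal SCCs, of the probability of being absorbed into that SCC when $\mathcal A$ is read as a Markov chain with uniform transition probabilities $1/|\Sigma|$, use unambiguity to ensure these contributions do not overlap, and compute the absorption probabilities by a polynomial-size linear system. You supply more detail than the paper does (the asymptotic reduction from lassos to reachability, and the non-degeneracy of the linear system in the nondeterministic-but-unambiguous case, which the paper delegates to a citation), but the decomposition and the key use of unambiguity are the same.
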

\begin{proof}
To compute the density of $\varphi$ we need to compute the density of each terminal accepting strongly connected component in the automaton $\mathcal A$. Because $\mathcal A$ is unambiguous, it is guaranteed that no model ends in two terminal accepting SCCs of $\mathcal A$ and thus the density is the sum of densities of all terminal accepting SCCs. The density of an SCC is given  by the probability of reaching the SCC. 
	Computing the probability of an SCC can be seen as a convergence problem of a Markov chain, where the automaton $\mathcal A$ can be thought of as a Markov chain, where the label of a  transition is replaced by its probability, i.e., a probability equal to $\frac{1}{|2^{AP}|}$. Both finding the terminal accepting  strongly connected components and computing their probabilities can be done in polynomial time \cite{Bodirsky+others/04/Efficiently}. \qed 
\end{proof}

\section{Density of LTL Properties}

In this section we reexamine the problems investigated in the last section for properties given as LTL formulas. For any LTL property $\varphi$ we can compute the density by constructing an unambiguous $\omega$-automaton for $\varphi$ and using the algorithm given in Theorem~\ref{theo:compconv}. However, the construction of the automaton is costly (exponential \cite{Baier:2008:PMC:1373322}) and can be avoided for many sub-classes of LTL. 

The results for LTL are summarized below, and follow from Theorem~\ref{theo:compconv}, \ref{theo:densm1nondeter}, and \ref{theo:denlarg0}, and from the fact that any LTL formula can be turned into an exponential unambiguous parity automaton \cite{Baier:2008:PMC:1373322}. 
\begin{theorem}
\begin{enumerate}
	\item Computing the density $\usrate_{\varphi}$ for an $\omega$-regular property $\varphi$ given as an LTL formula can be done in exponential time.
	\item Checking whether $\usrate_\varphi>0$ for an LTL formula $\varphi$ is \pspace-complete.
	\item Checking whether $\usrate_{\varphi} < 1$ for an LTL formula $\varphi$ is \pspace-complete.
\end{enumerate}
\end{theorem}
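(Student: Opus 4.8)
The plan is to derive all three items as corollaries of the $\omega$-regular results established for automata, using the standard translation from LTL to automata together with the known complexity of that translation. The one ingredient common to all three parts is that every LTL formula $\varphi$ over $\AP$ can be converted into an \emph{unambiguous} parity automaton of size $2^{O(|\varphi|)}$; I would cite this (as the excerpt does, via \cite{Baier:2008:PMC:1373322}) and treat it as a black box. With that in hand, Part (1) is immediate: construct the unambiguous parity automaton $\mathcal A_\varphi$, which takes exponential time and has exponential size, then invoke Theorem~\ref{theo:compconv}, whose algorithm runs in polynomial time in $|\mathcal A_\varphi|$, hence exponential time in $|\varphi|$. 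So Part (1) costs essentially nothing beyond quoting the two cited results.

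For Part (2), the upper bound again goes through the automaton. By Lemma~\ref{lem:maxSCC}, $\usrate_\varphi > 0$ iff the (unambiguous, exponential-size) parity automaton for $\varphi$ has a reachable terminal accepting SCC, and by the \nlogspace{} procedure of Theorem~\ref{theo:denlarg0} this can be checked in logarithmic space in $|\mathcal A_\varphi|$, i.e.\ polynomial space in $|\varphi|$; as in the proof of Theorem~\ref{theo:densm1nondeter}, one does not build $\mathcal A_\varphi$ explicitly but guesses its states on the fly and verifies consistency with the transition relation, keeping only a logarithmic-in-$|\mathcal A_\varphi|$ amount of information. For the matching \pspace{} lower bound, I would reduce from a known \pspace-hard LTL problem: the natural candidate is LTL satisfiability, or better, a variant tailored to the density. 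The cleanest route is a reduction that takes an LTL formula $\psi$ and produces $\varphi$ such that $\varphi$ is satisfiable (equivalently, a suitable Turing-machine acceptance condition holds) iff $\usrate_\varphi > 0$; concretely, one can arrange $\varphi$ to be a co-safety-like property whose good-prefixes have positive density exactly when $\psi$ has a model, following the Turing-machine encoding of \cite{Meyer:1972:EPR:1437899.1438639} much as in Theorem~\ref{theo:densm1nondeter}.

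For Part (3), the structure is symmetric. The upper bound is Theorem~\ref{theo:densm1nondeter} essentially verbatim: translate $\varphi$ to a nondeterministic B\"uchi (or parity) automaton of exponential size and run the \pspace{} on-the-fly search for a terminal \emph{non}-accepting SCC of the determinization; the extra exponential blow-up from Safra is absorbed into the polynomial space bound. The \pspace{} lower bound likewise follows the reduction sketched in Theorem~\ref{theo:densm1nondeter}, now phrased for LTL: reduce a polynomial-space Turing machine $\mathcal M$ and input $w$ to an LTL formula $\varphi$ so that $\mathcal M$ accepts $w$ iff $\usrate_\varphi < 1$, i.e.\ iff $\varphi$ has a non-model reachable with positive probability.

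I expect the main obstacle to be the lower bounds rather than the upper bounds, which are routine consequences of the cited automata translation and the earlier theorems. Specifically, the delicate point is to make the Turing-machine (or LTL-satisfiability) encoding produce a formula whose \emph{density}, not merely whose language emptiness, reflects the acceptance condition: one must ensure that a "yes" instance yields a property with a reachable terminal accepting SCC that is entered with strictly positive probability (for Part~2), and a "no" instance yields a property all of whose accepting SCCs are non-terminal, so that the rate of models converges to $0$ — and dually for Part~3. Getting the probability bookkeeping right, so that "reachable with positive probability" coincides exactly with "reachable at all" in the constructed automaton (which holds automatically since every finite prefix has positive probability under the uniform measure), is the crux; once that observation is in place the reductions of \cite{Meyer:1972:EPR:1437899.1438639} and the standard \pspace-hardness argument for LTL transfer directly.
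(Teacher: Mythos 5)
Your proposal matches the paper's argument, which is exactly the one-line observation that the three claims follow from Theorems~\ref{theo:compconv}, \ref{theo:densm1nondeter}, and \ref{theo:denlarg0} combined with the exponential translation of LTL into unambiguous parity automata, with the \pspace{} upper bounds obtained by running the logarithmic-space SCC searches on the fly over the exponential-size automaton. Your additional care about the lower bounds (ensuring the Turing-machine encoding controls the density rather than mere satisfiability, e.g.\ via a co-safety-style formula with a good prefix of positive measure) is sound and in fact more explicit than anything the paper provides.
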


In the following we present a series of syntactic LTL classes for which the density can be immediately given. Using these sub-classes we show later that the computation of the density for LTL formulas can be reduced to the computation of the density of a much smaller LTL formula. We distinguish following syntactic LTL classes:

\begin{itemize}
	\item \textbf{Bounded-Safety:} A \emph{bounded-safety} property $\varphi$ describes a set of infinite words, each with a prefix in a finite set $\Gamma \subseteq \Sigma^{k}$ for some $k$. A formula in the LTL fragment with only the temporal operator $\LTLnext$ is a bounded-safety formula.
	\item \textbf{Invariants:} An \emph{invariant} property $\varphi$ describes an unreachability property over a bounded-safety property $\psi$ and is given by the LTL fragment $\LTLsquare \psi$.
	\item \textbf{Guarantee:} A \emph{guarantee} property $\varphi$ is a reachability property defined over some bounded-safety property $\psi$ and is given by the LTL fragment $\LTLdiamond \psi$.
	\item \textbf{Persistence:} A \emph{persistence} property $\varphi$ is a co-B\"uchi condition defined over a bounded-safety property $\psi$ and is given by the LTL fragment $\LTLdiamond \LTLsquare \psi$.
	\item \textbf{Response:} A \emph{response} property $\varphi$ is a B\"uchi condition defined over a bounded-safety property and is given by the LTL fragment $\LTLsquare \LTLdiamond \psi$.
\end{itemize}

\begin{theorem}
\label{theo:subclassconv}
\begin{enumerate}
	\item Every bounded-safety property $\varphi$ not equivalent to false or to true is $\epsilon$-convergent. 
	\item Every invariant property or persistence property $\varphi$ is 0-convergent.
	\item Every guarantee property or response property $\varphi$  is 1-convergent.
\end{enumerate}
\end{theorem}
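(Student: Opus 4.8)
I would prove the three parts separately, in each case isolating the relevant class of lassos (base models, base non-models, loop models) and showing which ones dominate. The common tool is Lemma~\ref{lem:loopconv}: the density converges iff the rate of loop models converges, so in every case I first argue that this rate converges (often to~$0$), and then pin down the limit using the always-convergent rates of base models and base non-models.

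For part~1, let $\varphi$ be a bounded-safety property determined by a set $\Gamma\subseteq\Sigma^k$ of allowed length-$k$ prefixes, with $\emptyset\subsetneq\Gamma\subsetneq\Sigma^k$. For every bound $n\ge k$, every lasso $(u,v)$ of length $n$ has a base $u\cdot v$ whose length-$k$ prefix either lies in $\Gamma$ — in which case the lasso is a base model, since every extension of a good-prefix is a good-prefix — or does not, in which case it is a base non-model. Hence for $n\ge k$ there are no loop models and no loop non-models, so that rate is constantly $0$ and the density converges. The limit is then simply $|\Gamma|/|\Sigma|^k$: among the $n\cdot|\Sigma|^n$ lassos of length $n$, exactly $|\Gamma|\cdot|\Sigma|^{n-k}\cdot n$ have their prefix in $\Gamma$. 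Since $0<|\Gamma|<|\Sigma|^k$, this quotient is a constant strictly between $0$ and~$1$, so $\varphi$ is $\epsilon$-convergent.

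For part~2, both invariant and persistence properties are $0$-convergent. For an invariant $\LTLsquare\psi$ with $\psi$ a bounded-safety property (prefix set $\Gamma\subseteq\Sigma^k$), a lasso $(u,v)$ is a model only if \emph{every} window of length $k$ along $u\cdot v^\omega$ lies in $\Gamma$; since $\Gamma\ne\Sigma^k$ (the interesting case; otherwise the property is $\top$ or false, excluded exactly as in part~1's degenerate cases), this is a conjunction of $n$ independent-ish constraints each ruling out a constant fraction, so the cardinality is bounded by $c\cdot r^n\cdot n$ with $r<|\Sigma|$, whence $\srate_\varphi(n)\to 0$. For persistence $\LTLdiamond\LTLsquare\psi$ there are no base models and no base non-models (every finite word can be continued into a model and into a non-model, since $\psi$ is bounded-safety and nontrivial), so by Lemma~\ref{lem:loopconv} the density equals the limit of the rate of loop models; a loop model must have its \emph{loop} satisfy the window condition everywhere, which for a loop at position $i$ gives at most $c\cdot r^{i}$ choices, so the number of loop $n$-models is $\le \sum_{i=1}^{n} c\, r^{i-1}\cdot(\text{const})$, and dividing by $n\cdot|\Sigma|^n$ gives something that goes to $0$ — this is exactly the $\LTLdiamond\LTLsquare p$ computation already done in the text, generalized. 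So the density is~$0$.

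For part~3, guarantee $\LTLdiamond\psi$ and response $\LTLsquare\LTLdiamond\psi$ are $1$-convergent, and the cleanest route is duality: I would show that the complement of a guarantee property is (equivalent to) an invariant property and the complement of a response property is a persistence property, apply part~2 to conclude that the density of the complement is~$0$, and then invoke that $\usrate_{\varphi}+\usrate_{\overline\varphi}=1$ whenever both exist — which follows because $\card{\varphi}{n}+\card{\overline\varphi}{n}=n\cdot|\Sigma|^n$, so $\srate_\varphi(n)+\srate_{\overline\varphi}(n)=1$ identically, and part~2 already established that $\srate_{\overline\varphi}(n)$ converges (to $0$). Concretely, $\neg\LTLdiamond\psi \equiv \LTLsquare\neg\psi$, and $\neg\psi$ is again bounded-safety (complement of a prefix set), so $\LTLsquare\neg\psi$ is an invariant; likewise $\neg\LTLsquare\LTLdiamond\psi \equiv \LTLdiamond\LTLsquare\neg\psi$ is a persistence property. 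Thus $\usrate_{\overline\varphi}=0$ by part~2 and $\usrate_\varphi = 1$.

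**Main obstacle.** The delicate point is the counting bound in part~2: one must argue that imposing the window constraint of a nontrivial bounded-safety $\psi$ at \emph{every} position along the base (for invariants) or along the loop (for persistence) really forces an exponential gap $c\cdot r^n$ with $r<|\Sigma|$, rather than merely a subexponential saving. The honest way to see this is via the automaton for $\psi$: a nontrivial bounded-safety language has a deterministic safety automaton in which, from the ``all-good-so-far'' component, some letter leads out, so the number of length-$n$ words staying inside is bounded by a constant times $\rho^n$ where $\rho$ is the (strictly sub-maximal) spectral radius of the surviving sub-automaton — equivalently one can give the crude but sufficient bound by grouping positions into disjoint length-$k$ blocks, each of which independently must avoid at least one of the $|\Sigma|^k$ patterns, yielding $(|\Sigma|^k-1)^{\lfloor n/k\rfloor}\cdot|\Sigma|^{n \bmod k}\cdot n$ as an upper bound on the cardinality, which already suffices since $(|\Sigma|^k-1)^{1/k}<|\Sigma|$. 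Everything else is bookkeeping with the four lasso classes and Lemma~\ref{lem:loopconv}.
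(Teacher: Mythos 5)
The paper states Theorem~\ref{theo:subclassconv} without giving a proof, so there is nothing to compare against directly; the closest the paper comes is the trio of worked examples preceding the theorem ($\next p$, $q\LTLrelease p$, $\LTLdiamond\LTLsquare p$), and your argument is exactly the generalization of those computations via the four-class decomposition and Lemma~\ref{lem:loopconv}. Your proof is correct in substance: part~1's observation that for $n\ge k$ every lasso is either a base model or a base non-model (so the density is the constant $|\Gamma|/|\Sigma|^k\in(0,1)$) is right; the disjoint-block bound $(|\Sigma|^k-1)^{\lfloor n/k\rfloor}|\Sigma|^{n\bmod k}$ in part~2 is the honest way to get the exponential gap, and you correctly identify this as the one point that needs a real argument; and the duality $\neg\LTLdiamond\psi\equiv\LTLsquare\neg\psi$ together with $\srate_\varphi(n)+\srate_{\overline\varphi}(n)=1$ is the cleanest route to part~3. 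Two small points deserve explicit mention rather than a parenthetical: (i) for persistence, the correct criterion is that \emph{all cyclic windows of the loop $v$} (i.e., the windows of $v^\omega$) lie in $\Gamma$, and the count of admissible loops should be indexed by the loop \emph{length} $n-i+1$, not by the position $i$ as written — the bound still goes through, but the exponent in your displayed sum is off; and (ii) parts~2 and~3 as stated in the paper silently presuppose that the inner bounded-safety formula $\psi$ is nontrivial (if $\psi\equiv\mathit{true}$ then $\LTLsquare\psi\equiv\top$ is $1$-convergent, and if $\psi\equiv\mathit{false}$ then $\LTLdiamond\psi$ is $0$-convergent), so the degenerate-case exclusion you borrow from part~1 is genuinely needed and should be stated as a hypothesis, not just an aside.
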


\subsection{Composition of LTL Properties}
When given an LTL formula $\varphi$ composed of formulas from the syntactic classes presented above we can compute the density of $\varphi$ using the rules given in Table~\ref{tab:usrateconj}. The intersection of properties $\varphi_1$ and $\varphi_2$ that are convergent to 1 results in a new property that also has an density $\usrate_{\varphi_1 \cap \varphi_2} = 1$. When $\varphi_1$ and $\varphi_2$ converge to 0 then $\usrate_{\varphi_1 \cap \varphi_2} = 0$. The same also holds when considering the union of the properties $\varphi_1$ and $\varphi_2$. In the case of $\epsilon$-convergent properties $\varphi_1$ and $\varphi_2$ the density of the intersection of the properties depends intersected properties. If both $\varphi_1$ and $\varphi_2$ were bounded-safety properties this value depends on the size of the intersection of characterization sets of $\varphi_1$ and $\varphi_2$. It can range from $0$, when the characterization sets are disjoint, to $\epsilon$ when the properties are equivalent. When building the union of two $\epsilon$-convergent properties the density can range from $\epsilon$ to 1. If both properties were again bounded safety properties then the density is equal to $\epsilon$ when $\varphi_1$ and $\varphi_2$ are equivalent and to 1 when their characterization sets are disjoint.
\begin{table}[t]
\caption{density for conjunctive (lower triangle) and disjunctive (upper triangle) compositions:}
\centering
\begin{tabular}{|c||c|c|c||c|}
	\hline 
	$\usrate_{\varphi_1 \cap \varphi_2}$ & 1 & $\epsilon$& 0 &$\usrate_{\varphi_1 \cup \varphi_2}$\\
	\hline
	\hline
	1 & 1 & 1 &1  &1\\
	\hline
	$\epsilon$ & $\epsilon$ & \diaghead(-1,1){\hskip 0.9cm}{0/$\epsilon$}{1/$\epsilon$}  & $\epsilon$ &$\epsilon$\\
	\hline
	0 & 0 & 0& 0 &0\\
	\hline
\end{tabular}
\label{tab:usrateconj}
\end{table}
Given an LTL formula $\varphi$ composed of the syntactic classes we  apply the rules presented in Table~\ref{tab:usrateconj} and the results from Theorem~\ref{theo:subclassconv},  until no rule is applicable anymore. The remaining formula is a bounded safety formula for which we apply the algorithm given in Theorem~\ref{theo:compconv}.

For example, consider the LTL formula over the set of atomic propositions $\AP= \{a,b\}$:
$$\varphi = (a \vee \next b) \wedge (\next \next \next (b \wedge a) \vee \LTLdiamond a) \vee (\LTLsquare b \wedge \LTLdiamond (a \wedge \next b))$$

\noindent We start by evaluating the subformulas  
$$\usrate_{\LTLsquare b \wedge \LTLdiamond (a \wedge \next b)} = \usrate_{\LTLsquare b}=0 \text{~~~~and~~~~}  \usrate_{\next \next \next (b \wedge a) \vee \LTLdiamond a} = 1$$
  Thus the density is equal to the one of the formula $(a \vee \next b)$, which is a bounded-safety property for which we can use the algorithm in Theorem~\ref{theo:compconv} and compute the density 
 $$\usrate_{a \vee \next b}= 0.5 + 0.5*0.5= 0.75$$


\section{Discussion}

With this paper, we have initiated an investigation of the density of models of linear-time properties. Our work extends the classic results for finite words to ultimately periodic infinite words.
In comparison to finite words, the new class of models significantly complicates the analysis; the proof techniques introduced in this paper, in particular the analysis of classes of loop and base  models and non-models, have allowed us, however, to obtain a classification of the major property classes according to the convergence of the density.
Computing the density for omega-regular properties can be done algorithmically, yet is very expensive. 
In contrast to expensive LTL algorithms presented above, the qualitative analysis can be obtained for free, for the syntactic fragments for 
the different property types introduced in the paper (and their combinations). 

The obvious next step is to exploit the results algorithmically. 
It may be possible to steer randomized algorithms such as Monte Carlo model checking~\cite{Grosu/05/Monte} towards areas of the solution space where we are most likely to find a model. 
In planning, the choice between exploration and backtracking in a 
temporal planner could be biased towards exploration in 
situations with increasing probability, and towards backtracking 
in situations with decreasing probability. It may also be possible to develop approximative algorithms that replace a complicated linear-time property with a simpler, but ultimately equivalent property, such as a parity condition with a smaller number of colors. In similar techniques for properties of finite words, the density of the difference language is used to verify that the error introduced by the approximation is small~\cite{Eisman:2005:ARN:1082161.1082186}.

A big challenge is to extend the results further to tree models and, thus, to determine the density of branching-time properties. A first step into this direction is made by model counting algorithms for tree models~\cite{DBLP:conf/lata/FinkbeinerT14}. Since tree models can be seen as implementations in the sense of reactive synthesis~\cite{DBLP:conf/fossacs/Thomas09}, this line of work might also lead to a better understanding of the complexity of the synthesis problem, and perhaps to new randomized synthesis algorithms.

\bibliographystyle{plain}
\bibliography{biblio.bib}

\end{document}